\begin{document}
\title[Dividing a Graphical Cake]{Dividing a Graphical Cake}
\author{Xiaohui Bei}
\affiliation{%
  \institution{Nanyang Technological University}
  \country{Singapore}}
\author{Warut Suksompong}
\affiliation{%
  \institution{University of Oxford}
  \country{UK}}

\begin{abstract}
We consider the classical cake-cutting problem where we wish to fairly divide a heterogeneous resource, often modeled as a cake, among interested agents.
Work on the subject typically assumes that the cake is represented by an interval.
In this paper, we introduce a generalized setting where the cake can be in the form of the set of edges of an undirected graph. 
This allows us to model the division of road or cable networks.
Unlike in the canonical setting, common fairness criteria such as proportionality cannot always be satisfied in our setting if each agent must receive a connected subgraph.
We determine the optimal approximation of proportionality that can be obtained for any number of agents with arbitrary valuations, and exhibit tight guarantees for each graph in the case of two agents.
In addition, when more than one connected piece per agent is allowed, we establish the best egalitarian welfare guarantee for each total number of connected pieces.
We also study a number of variants and extensions, including when approximate equitability is considered, or when the item to be divided is undesirable (also known as chore division).
\end{abstract}

\maketitle


\section{Introduction}

Cake cutting refers to the problem of fairly allocating a divisible resource, often modeled as a cake, among agents with varying preferences. The problem dates back to shortly after the end of World War II \citep{Steinhaus48} and, not surprisingly given its wide range of applications, still enjoys significant attention in mathematics, computer science, economics, and political science to this day \citep{BramsTa96,RobertsonWe98,Procaccia16}.

What does it mean for an allocation to be fair? 
Steinhaus \citet{Steinhaus48} proposed the following definition of fairness: if the cake is divided between $n$ agents, each agent should receive a part that she values at least $1/n$ of the entire cake. 
This definition became known as \emph{proportionality}, and is one of the most fundamental notions in the literature of fair division. 
In his seminal article, Steinhaus showed that a proportional allocation can be found for any number of agents with arbitrary preferences over the cake---Steinhaus' method, which he attributed to Knaster and Banach, was later formulated as a moving-knife procedure by Dubins and Spanier \citet{DubinsSp61}. 
The procedure works by having a referee move a knife over the cake from left to right. 
Whenever the left part has value $1/n$ of the entire cake for one of the agents, the agent takes that part of the cake and leaves; the procedure is then repeated among the remaining agents. 
In addition to ensuring proportionality, the Dubins-Spanier protocol has the important property that it always allocates to every agent a \emph{connected} piece of the cake. 
Without this property, it may well be that an agent is presented with---in the words of Stromquist \cite{Stromquist80}---a ``union of crumbs''.

The proportionality guarantee of the Dubins-Spanier protocol holds for a variety of items that one may wish to divide, since an item of any shape or form can be ``projected'' onto a line, which we can then run the protocol on. 
However, the connectivity property does not necessarily translate from the line back to the original shape. 
A simple illustrating example is when the item has the shape of a ring (e.g., a donut or a ring road): projecting the ring onto a line and applying Dubins-Spanier may result in an agent receiving disconnected pieces of the ring. 
For this example, the difficulty can be circumvented by cutting the ring at an arbitrary point, stretching it into a line, and running the protocol to achieve proportionality. 
Nevertheless, one may already begin to suspect that this type of fix no longer works when the shapes get more complex. 

In this paper, we consider a natural setting where the cake is represented by the set of edges of an arbitrary undirected graph. 
The graph could correspond to a resource in the form of a network, such as a road or power cable network. 
The canonical cake-cutting setting, where the cake is assumed to be the interval $[0,1]$, is a special case of our setting, with the graph consisting of a single edge.
In this generalized graphical setting, we show that proportionality cannot always be attained if connectivity is required.
Therefore, our goal in this work is to provide the best possible approximation of proportionality that can be achieved in each graph. 
As we will see, despite allowing arbitrary graphs and agents' preferences in our model, we can still obtain several strong fairness guarantees.
Furthermore, we study a number of variants and extensions, including when each agent can get more than one connected piece, or when the item to be divided is undesirable (often referred to as a \emph{chore}).

\subsection{Our Results}

We assume throughout the paper that the agents are endowed with valuation functions that are additive and normalized with each agent having value $1$ for the whole cake; these assumptions are standard in the cake-cutting literature \citep{Procaccia16}. 
We also assume that the pieces of cake that different agents receive may intersect in a finite number of points.
Our formal model is described in Section~\ref{sec:prelim}.

In Section~\ref{sec:anyagents}, we provide general guarantees that hold for any number of agents. 
For $n$ agents with arbitrary valuations and any graph, we establish the existence of a connected allocation that gives every agent a utility of at least $\frac{1}{2n-1}$. We also show that this bound is tight even when the agents have identical valuations and the graph is a star with $2n-1$ edges. 
In addition, for every specific star graph, we determine the optimal utility that can be guaranteed for each number of agents.

In Section~\ref{sec:twoagents}, we delve deeper into the case of two agents.
While we know from Section~\ref{sec:anyagents} that both agents can be guaranteed a utility of $1/3$ in general, for certain graphs it is possible to do better. 
Perhaps surprisingly, we show that the optimal guarantee for each graph is always either $1/3$ or $1/2$; the latter case corresponds to a proportional allocation. The classification depends on a graph property that we call \emph{almost bridgeless}---a graph satisfies this property if we can add an edge so that the resulting graph contains no bridges, where a bridge refers to an edge that is not contained in any cycle. 
We show that a guarantee of $1/2$ can be obtained if the graph is almost bridgeless, while $1/3$ is the best possible guarantee otherwise.

Next, we strengthen the result that both agents can be guaranteed a utility of $1/3$ for any graph by establishing the existence of a connected allocation such that the first agent receives value at least $1/2$ and the second agent at least $1/3$. 
More generally, we characterize all pairs $(\alpha,\beta)$ for which there always exists a connected allocation that yields utility at least $\alpha$ and $\beta$ to the first and second agent, respectively.
On the other hand, if we are only interested in giving utility $\alpha$ to one agent and $\beta$ to the other, and are willing to give up control over which agent receives which guarantee, then additional pairs $(\alpha,\beta)$ become achievable---again, we give a complete characterization of all such pairs.
Moreover, we consider allocating more than one connected piece to each agent: we show that for any positive integer $k$, both agents can be guaranteed a utility of $\frac{1}{2}-\frac{1}{2\cdot 3^k}$ if we allow the agents to receive a total of $k+1$ connected pieces, and this is tight.
We also study approximate \emph{equitability} and prove the existence of a connected allocation for which the agents' utilities differ by no more than $1/3$; we again establish the tightness of the bound.

Finally, in Section~\ref{sec:choredivision}, we turn our attention to chore division. 
In the case of two agents there is a simple reduction between the settings of cake and chores, so all of our results in Section~\ref{sec:twoagents} carry over to chore division. 
By contrast, when there are more than two agents, the relationship between the two settings is much less clear. 
We show that there exists a connected allocation that incurs cost at most $\frac{2}{n+1}$ for $n\leq 5$, and that no better bound can be obtained for any $n$.

We remark that all of our positive results are constructive: for each result, we exhibit a moving-knife protocol that achieves the desired guarantee. 
Moreover, one can make these protocols discrete, so that they only use the \emph{cut} and \emph{evaluation} queries allowed by the Robertson-Webb query model in order to access the valuation functions of the agents \citep{RobertsonWe98}.

\subsection{Further Related Work}

While our graphical cake model is new to the best of our knowledge, graphs have recently been studied in the context of allocating \emph{indivisible} items. 
In particular, the items correspond to vertices of an undirected graph, and each agent must be allocated a connected subgraph of the graph. 
A line of work has explored existence and complexity questions for several fairness notions, both in the case of goods \citep{BouveretCeEl17,LoncTr18,BiloCaFl19,BeiIgLu19,IgarashiPe19,Suksompong19} and chores \citep{BouveretCeLe19}.

Connectivity constraints are commonly considered in the cake-cutting literature, where each agent is allocated a single subinterval of the interval cake \citep{Stromquist80,Stromquist08,Su99,BeiChHu12,CechlarovaPi12,AumannDoHa13,CechlarovaDoPi13,AumannDo15,SegalhaleviHaAu16}.
Segal-Halevi et al. \cite{SegalhaleviNiHa17} studied the fair division of land and introduced geometric constraints to the setting by requiring that allocated pieces be of certain shape---such requirements are important since a long but narrow piece of land is likely to be of little practical use. 
Similarly to our setting, a proportional allocation does not always exist in the presence of these constraints, and the authors examined the approximations of proportionality that can be obtained.

\section{Preliminaries}
\label{sec:prelim}

Let $N=\{1,2,\dots,n\}$ be the set of agents, and $G=(V,E)$ be a finite and connected undirected graph representing the cake, with no loops but possibly with multiple edges joining the same pair of vertices.\footnote{A loop can be represented in our model by adding a new vertex inside the loop, thereby breaking the loop into two edges joining the same pair of vertices.} 
Denote by $m$ the number of edges. 
Each edge in $E$ can be viewed as an interval of the cake. 
For any points $x,y$ on an edge, we write $[x,y]$ or $[y,x]$ to denote the interval of the cake between $x$ and $y$; we sometimes identify a vertex $v\in V$ with the corresponding endpoint of the edges adjacent to $v$.  

A \emph{piece of cake} is a finite union of disjoint intervals, where each interval is a subinterval of an edge and the intervals in the piece may belong to different edges.\footnote{As is commonly done in cake cutting, we assume that all intervals are closed intervals. 
Two intervals are said to be \emph{disjoint} if they intersect in at most one point, and two pieces of cake are said to be \emph{disjoint} if they intersect in a finite number of points. 
If we adopt the stricter convention that each point can only be allocated to one agent (so intervals can be open, half-open, or closed) and two intervals are disjoint only if their intersection is empty, there are strong negative results. 
For example, on a star graph, at most one agent would be able to receive intervals from more than one branch in a connected allocation.} 
A piece of cake is said to be \emph{connected} if for any two points in the piece, it is possible to get from one point to the other along the graph $G$ by only traversing this piece of cake. 
Each agent $i$ has a nonnegative \emph{valuation function} (or \emph{utility function}) $f_i$, which specifies the agent's value for each piece of cake. 
An \emph{instance} consists of the graph $G$, the agents, and their valuation functions. As is standard in the cake-cutting literature \citep{Procaccia16}, we assume that the valuation functions are 
\begin{itemize}
    \item \emph{normalized}: the value of an agent for the entire cake is $1$;
    \item \emph{divisible}: for each interval $[x,y]$ and $0\leq\lambda\leq 1$, there is a point $z\in [x,y]$ such that $f_i([x,z])=\lambda\cdot f_i([x,y])$;
    \item \emph{additive}: the value of an agent for a piece of cake is the sum of her values for the intervals in the piece. 
\end{itemize}

An \emph{allocation} of the cake is denoted by a vector $A=(A_1,\dots,A_n)$, where each $A_i$ is a piece of cake, and $A_i$ and $A_j$ are disjoint for all $i\neq j$. 
An allocation is said to be \emph{complete} if the entire cake is allocated, and \emph{connected} if each $A_i$ is a connected piece of cake.
The \emph{egalitarian welfare} of an allocation is defined as $\min_{i\in N}f_i(A_i)$. An allocation is \emph{proportional} if its egalitarian welfare is at least $1/n$. 
The \emph{inequity} of an allocation is defined as $\max_{i,j\in N}|f_i(A_i)-f_j(A_j)|$; an allocation with inequity $0$ is said to be \emph{equitable}.

We make analogous assumptions for chore division (Section~\ref{sec:choredivision}). 
Each agent has a nonnegative \emph{cost function} $f_i$ for the chore, which is normalized, divisible, and additive. 
The \emph{egalitarian cost} of an allocation is defined as $\max_{i\in N} f_i(A_i)$. 
Naturally, we require the entire chore to be allocated, so we restrict our attention to complete allocations of the chore.

\section{Any Number of Agents}
\label{sec:anyagents}

In this section, we present an egalitarian welfare guarantee that holds for any number of agents and arbitrary graphs, and derive improved guarantees in the case where the graph is a star. 

\subsection{General Guarantees}

We begin by showing that it is always possible to give every agent a utility of at least $\frac{1}{2n-1}$, and this bound is tight.
Similarly to the Dubins-Spanier protocol, our algorithm proceeds by identifying a piece that is valuable enough for one agent but at the same time not too valuable for the other agents, allocating such a piece to the former agent, and recursing on the latter agents.

\begin{theorem}
\label{thm:any-agent-egal}
For any graph $G$, there exists a connected allocation with egalitarian welfare at least $\frac{1}{2n-1}$. 
On the other hand, there exists a graph $G$ and identical valuations of the agents such that any connected allocation yields egalitarian welfare at most $\frac{1}{2n-1}$.
\end{theorem}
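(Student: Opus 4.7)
The plan is to handle the two directions of the theorem separately.

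For the tightness claim, I would take $G$ to be the star with $2n-1$ edges, endowed with identical valuations assigning mass $\frac{1}{2n-1}$ to each edge. In any connected allocation, let $s_j$ denote the number of edges in which agent $j$'s piece has positive length. Because the star's unique cut vertex is the center $c$, any piece spanning $s_j \ge 2$ edges must meet $c$ as an endpoint within each of those $s_j$ edges. Within any single edge, at most one piece can have $c$ as an endpoint, since two such pieces would overlap on a positive-length initial sub-interval and violate disjointness. Counting incidences at $c$ yields $\sum_{j:\,s_j \ge 2} s_j \le 2n-1$. Agent $j$'s utility is at most $s_j/(2n-1)$, so a uniform lower bound strictly greater than $1/(2n-1)$ would force every $s_j \ge 2$ and hence $2n \le \sum_j s_j \le 2n-1$, a contradiction.

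For existence I would induct on $n$ (the base $n = 1$ is immediate), reducing the inductive step to the following moving-knife claim: for every connected $G$ and $n \ge 2$ agents, there exist a connected piece $P$ and an agent $i$ with $f_i(P) = f_i(G)/(2n-1)$, $f_j(P) \le f_j(G)/(2n-1)$ for every $j$, and $G \setminus P$ a connected subgraph. Given the claim, allocate $P$ to $i$ and recurse on $G \setminus P$ with the remaining $n-1$ agents. Each such agent $j$ has $f_j(G \setminus P) \ge (2n-2) f_j(G)/(2n-1)$, hence by the inductive hypothesis receives at least $f_j(G \setminus P)/(2n-3) \ge (2n-2) f_j(G)/[(2n-1)(2n-3)] \ge f_j(G)/(2n-1)$, completing the step.

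The claim itself I would prove via a one-parameter family $\{P_t\}_{t \in [0,1]}$ of connected pieces growing from the empty piece to $G$, with each map $t \mapsto f_j(P_t)$ continuous by divisibility. By the intermediate value theorem there is a first time $t^*$ at which some $f_j(P_{t^*})$ reaches $f_j(G)/(2n-1)$; at that moment no other agent has crossed their threshold, so $P := P_{t^*}$ together with the first-crossing agent satisfies the two value conditions automatically.

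The \textbf{main obstacle} is arranging the growth so that $G \setminus P_{t^*}$ is itself a connected subgraph of $G$. Non-bridge edges are easy: a knife sweeping along a non-bridge edge leaves the remainder connected through a cycle bypassing that edge. The difficulty is bridges, since consuming an entire bridge would sever the graph. My plan is to run a DFS-style traversal in which the knife first exhausts a full bridge-less block of $G$ before crossing any bridge to an unvisited block. When the knife finally crosses a bridge, the near-side block has already been fully absorbed into $P_t$, so removing $P_t$ does not leave a dangling fragment on the near side and the complement stays connected. Shared vertices between $P_t$ and its complement are permitted by the paper's convention that pieces may intersect in finitely many points, which is what allows this traversal to be realised as a genuine continuous moving knife.
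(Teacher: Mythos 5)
Your tightness argument is correct and is essentially the paper's: on the star with $2n-1$ edges, a welfare above $\frac{1}{2n-1}$ forces every agent's piece to occupy at least two edges, and counting the (at most one per edge) nondegenerate intervals incident to the center gives $2n\leq 2n-1$. Your incidence-counting phrasing is a fine substitute for the paper's ``dedicated edges'' phrasing.

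The existence half, however, rests on a claim that is false. You assert that for every connected $G$ one can find a connected piece $P$ with connected complement such that $f_i(P)=\frac{1}{2n-1}$ for some agent $i$ and $f_j(P)\leq\frac{1}{2n-1}$ for \emph{all} agents $j$. Take $n=2$, let $G$ be a star with four edges, and give both agents the identical valuation assigning $\frac{1}{4}$ uniformly to each edge. A connected piece of a star that touches two or more edges must contain an interval $[c,x_j]$ from the center $c$ in each edge it touches, so its complement leaves an isolated leaf stub $[x_j,v_j]$ unless that stub is degenerate; checking all cases, every connected piece with connected complement has common value in $[0,\frac{1}{4}]\cup\{\frac{1}{2}\}\cup[\frac{3}{4},1]$, and no piece of value $\frac{1}{3}$ exists. (This is no accident: if your claim held with threshold $\frac{1}{n}$ it would yield full proportionality, contradicting the tightness half you just proved.) The same example defeats your moving-knife construction directly: in a tree every block is a single bridge, and the moment the knife leaves the first edge and enters a second, the unswept stub of the second edge is severed from the untouched edges, so no continuous family $\{P_t\}$ with connected complements exists. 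The paper's proof avoids this by first splitting cycle edges to reduce to a tree and then extracting one of two kinds of pieces rooted at a minimal vertex $v$: either a branch pruned at a continuously tunable cut point (worth exactly $\alpha$ to the first-crossing agent, hence at most $\alpha$ to the rest), or a \emph{minimal union of full branches} of $v$, each branch worth less than $\alpha$ to everyone, so the union is worth less than $2\alpha$ to the other agents. Allowing the extracted piece to cost up to $2\alpha$ to the others is exactly what produces the recursion $\frac{1-2\alpha}{2n-3}\geq\alpha$ and hence the constant $\frac{1}{2n-1}$; your version, which insists on a cost of at most $\alpha$, cannot be repaired without this second case.
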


\begin{proof}
Let $\alpha:=\frac{1}{2n-1}$. To show the second part of the theorem, let $G$ be a star with $2n-1$ edges such that every agent values each edge exactly $\alpha$, and the value is distributed uniformly within the edge. Assume for contradiction that there is a connected allocation with egalitarian welfare strictly greater than $\alpha$. 
Consider any agent $i$. The agent must receive intervals from at least two edges, and these intervals must be connected via the center vertex. 
Note that the unallocated parts of these edges cannot be allocated to other agents, since any agent who receives an interval from such a part cannot receive intervals from other edges and would therefore obtain value less than $\alpha$. 
Hence, at least two edges are only allocated to agent $i$. However, this means that there must be at least $2n$ edges in total, a contradiction.

We now prove the first part of the theorem. Let $G$ be an arbitrary graph. We will show that there exists a moving-knife algorithm that produces a connected allocation with egalitarian welfare at least $\alpha$. 
We proceed by induction on $n$; the statement trivially holds for $n=1$ since we can simply allocate the entire cake to the only agent. Assume that the statement holds for $n-1$ agents, and consider an instance with $n$ agents.

First, we claim that we can turn $G$ into a tree. As long as $G$ contains at least one cycle, pick an edge $uv$ that belongs to a cycle, add a new vertex $v'$, and replace this edge by an edge $uv'$ while keeping the remaining edges of the graph as before.
Since at least one cycle is removed by this operation and no new cycle is created, $G$ eventually becomes a tree. 
Note that any connected allocation of the modified graph is a connected allocation in the original graph with the same value for every agent, so it suffices to prove the theorem for the modified graph.

Choose an arbitrary vertex $u$ of the tree $G$ as its root. Let $v$ be a vertex such that the subtree rooted at $v$ yields value at least $\alpha$ to some agent, and the same does not hold for the subtree rooted at any child of $v$. Let $w_1,\dots,w_k$ be the children of $v$. We consider two cases:

\begin{itemize}
    \item Case 1: At least one of the $k$ branches of $v$ along with the corresponding subtree yields value at least $\alpha$ to some agent. Assume without loss of generality that the branch containing $w_1$ is one such branch. 
    By our assumption, the subtree rooted at $w_1$ yields value less than $\alpha$ to all agents. Hence, by moving a knife from $w_1$ to $v$, we can find the point $x$ closest to $w_1$ such that some agent $i$ values the interval $[w_1,x]$ together with the subtree rooted at $w_1$ exactly $\alpha$, and all other agents value this piece of cake at most $\alpha$.
    We allocate this piece of cake to agent $i$, and make $x$ a new vertex in the remaining graph, which has value at least $1-\alpha$ for each of the remaining agents. 
    By the inductive hypothesis, there exists a connected allocation of the remaining graph to the $n-1$ agents such that every agent receives value at least $\frac{1}{2n-3}\cdot(1-\alpha)=\frac{1}{2n-3}\cdot\frac{2n-2}{2n-1} > \alpha$, as desired. 
    \item Case 2: Every branch of $v$ along with the corresponding subtree yields value less than $\alpha$ to all agents. 
    Let $t\in\{1,2,\dots,k\}$ be the smallest number such that the first $t$ branches and their subtrees together yield value at least $\alpha$ to some agent $i$. 
    We allocate this piece of cake to agent $i$. For every other agent, the first $t-1$ branches is worth less than $\alpha$ and the $t$th branch is also worth less than $\alpha$, so the piece of cake allocated to agent $i$ is worth less than $2\alpha$. 
    By the inductive hypothesis, there exists a connected allocation of the remaining graph to the $n-1$ agents such that every agent receives value at least $\frac{1}{2n-3}\cdot(1-2\alpha) = \frac{1}{2n-1} = \alpha$, as desired.
\end{itemize}
The two cases together complete the induction.
\end{proof}

Note that by following the algorithm in the proof of Theorem~\ref{thm:any-agent-egal}, we also obtain the following statement, which will be useful for our later results.

\begin{lemma}
\label{lem:extract-piece}
Let $H$ be a connected piece of cake in a graph $G$, and suppose that all agents have value $x$ for $H$.
For any $\alpha\leq x$, there exists a partition of $H$ into two connected pieces such that one of the agents has value at least $\alpha$ for the first piece, while all of the remaining agents have value at most $2\alpha$ for this piece.
\end{lemma}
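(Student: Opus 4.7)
The plan is to recycle the algorithm from the proof of Theorem~\ref{thm:any-agent-egal}, since it essentially extracts a single ``starter'' piece that one agent values at $\alpha$ while no other agent values it above $2\alpha$; once this observation is unpacked, the lemma is just a restatement of what one step of that procedure produces. The assumption that every agent values $H$ at the same amount $x \ge \alpha$ is only used to guarantee that enough ``mass'' is present for some agent to reach $\alpha$.

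First I would reduce $H$ to a tree using the same cycle-breaking trick as in the theorem: as long as there is a cycle in $H$, pick an edge $uv$ on that cycle, introduce a fresh vertex $v'$, and replace $uv$ by $uv'$. This operation preserves connectedness, strictly decreases the number of cycles, and does not affect any agent's valuation, so we may assume $H$ is a tree. Root it at an arbitrary vertex $u$ and choose a deepest vertex $v$ whose rooted subtree is worth at least $\alpha$ to some agent (such a $v$ exists because the root itself works, since every agent values $H$ at $x \ge \alpha$). By the choice of $v$, every subtree rooted at a child of $v$ is worth strictly less than $\alpha$ to every agent.

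Next I would carry out the two-case analysis from Case~1 and Case~2 of the theorem. In Case~1, some branch from $v$ to a child $w_j$, together with the subtree rooted at $w_j$, is worth at least $\alpha$ to some agent. Sliding a knife along the edge from $w_j$ toward $v$, I stop at the first point $y$ at which some agent $i$ values the resulting piece (the subtree rooted at $w_j$ together with $[w_j,y]$) exactly $\alpha$; by continuity, every other agent values this piece at most $\alpha \le 2\alpha$. In Case~2, no such branch exists, so I let $t$ be the smallest index such that the union of the first $t$ branches together with their subtrees is worth at least $\alpha$ to some agent $i$, and allocate that union as the first piece. For every other agent, the first $t-1$ branches contribute less than $\alpha$ by minimality of $t$, and the $t$th branch together with its subtree contributes less than $\alpha$ by the choice of $v$, so the total is strictly below $2\alpha$.

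The main thing to verify, beyond the bounds on values which follow exactly as in the theorem, is that both resulting pieces are connected. In Case~1 the extracted piece is a subtree plus a prefix of the parent edge, so its complement in $H$ is $H$ with a ``pendant'' removed, which is connected; in Case~2 the extracted piece consists of several complete branches issuing from $v$, and its complement is the remaining branches glued at $v$, which is also connected (recall that pieces sharing the single point $v$ are considered disjoint in our model). This is the only subtle point, and it is handled by the same tree structure that drove the theorem, so no genuinely new argument is needed.
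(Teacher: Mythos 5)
Your proposal is correct and follows essentially the same route as the paper, which derives the lemma by running one step of the moving-knife algorithm from the proof of Theorem~\ref{thm:any-agent-egal} on $H$ (cycle-breaking into a tree, choosing a minimal subtree worth at least $\alpha$ to some agent, and the same two-case knife/branch analysis). Your write-up simply makes explicit the details---existence of the stopping point, the $\alpha$ versus $2\alpha$ bounds in the two cases, and connectivity of both the extracted piece and its complement---that the paper leaves implicit.
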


\subsection{Stars}

For certain graphs, it is possible to improve upon the guarantee provided by Theorem~\ref{thm:any-agent-egal}---an obvious example is the graph consisting of a single edge, for which the Dubins-Spanier protocol yields an egalitarian welfare of at least $1/n$. We now derive the optimal egalitarian welfare guarantee in the case where the graph is a star. For integers $n\geq 2$ and $k\geq 3$, define
\[
f(n,k) = 
  \begin{cases}
    \dfrac{1}{n+\lceil k/2\rceil - 1} & \text{for } k < 2n-1;\\
        \dfrac{1}{2n-1} & \text{for } k \geq 2n-1.
  \end{cases}
\]

\begin{theorem}
\label{thm:any-agent-stars}
Let $n\geq 2$ and $k\geq 3$, and let $G$ be a star with $k$ edges. There exists a connected allocation with egalitarian welfare at least $f(n,k)$. 
Moreover, the bound $f(n,k)$ is tight.
\end{theorem}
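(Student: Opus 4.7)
The plan is to prove both directions of Theorem~\ref{thm:any-agent-stars} by running the moving-knife algorithm of Theorem~\ref{thm:any-agent-egal} with the sharper threshold $\alpha:=f(n,k)$, and then exhibiting a matching tight instance via a counting argument on the star.

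For the achievability side, I would induct on $n$ with $n=1$ trivial. Rooting the star at its center, every child of the root is a leaf, so the ``branches'' of the algorithm of Theorem~\ref{thm:any-agent-egal} are exactly the $k$ edges. In Case~1 (some edge is valued at least $\alpha$ by some agent), move a knife inward from the leaf of such an edge to carve off an initial segment worth exactly $\alpha$ to some agent $i$ and at most $\alpha$ to everyone else, then recurse on the resulting star ($k$ edges, one shortened) with $n-1$ agents whose remaining normalized values are each at least $1-\alpha$. In Case~2 (every edge is worth less than $\alpha$ to every agent), agglomerate branches until some agent $i$ first values the union at least $\alpha$; this takes $t\geq 2$ branches, the piece is worth less than $2\alpha$ to everyone, and we recurse on a star with $k-t$ edges and $n-1$ agents whose remaining values exceed $1-2\alpha$. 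The inductive hypothesis then reduces the guarantee to the arithmetic inequalities $(1-\alpha)\cdot f(n-1,k)\geq\alpha$ and $(1-2\alpha)\cdot f(n-1,k-t)\geq\alpha$; the latter reduces to $\lceil k/2\rceil\geq\lceil(k-t)/2\rceil+1$, which holds precisely because $t\geq 2$.

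For the tightness side I split on $k$. When $k\geq 2n-1$, take the tight construction of Theorem~\ref{thm:any-agent-egal} and pad it with $k-(2n-1)$ zero-value edges; the argument from Theorem~\ref{thm:any-agent-egal} goes through unchanged. When $k<2n-1$, set $u:=n+\lceil k/2\rceil-1=1/\alpha$ and place on the star $k$ edges of uniform value $c_j\alpha$ with each $c_j\geq 1$ an integer and $\sum_j c_j=u$ (possible since $k\leq 2n-2$ implies $u\geq k$), giving all agents this common valuation. The heart of the argument is then the following counting: on edge $j$, a single-edge piece of value more than $\alpha$ consumes more than $\alpha$ of the edge's total value $c_j\alpha$, so at most $c_j-1$ such single-edge pieces fit (even if a through-center piece also uses part of the initial segment, since all pieces on the edge together occupy at most $c_j\alpha$), giving at most $u-k$ such pieces in total. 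Through-center pieces are pairwise edge-disjoint and each uses at least two edges, so there are at most $\lfloor k/2\rfloor$ of them. Summing yields at most $(u-k)+\lfloor k/2\rfloor=u-\lceil k/2\rceil=n-1$ agents with value strictly greater than $\alpha$, forcing some agent to receive at most $\alpha$. I expect the main subtlety to lie in this counting step---specifically in handling the intermixing of single-edge and through-center pieces on a common edge, which turns out to cause no trouble because the capacity bound $c_j\alpha$ on each edge applies to all pieces on it regardless of type.
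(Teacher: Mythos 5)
Your proposal is correct and follows essentially the same route as the paper: the same moving-knife induction on $n$ with threshold $f(n,k)$ (the paper streamlines this by observing that for $k\le 2n-2$ one has $f(n,k)\le 1/k$, so only your Case~1 ever arises, and by citing Theorem~\ref{thm:any-agent-egal} directly when $k\ge 2n-1$), and the same tightness counting of through-center versus single-edge pieces, where your arbitrary integer weights $c_j$ mildly generalize the paper's specific choice of $k-1$ edges of value $f(n,k)$ and one edge of value $1-(k-1)f(n,k)$. Your Case~2 arithmetic and the per-edge capacity bound both check out, so I see no gaps.
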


The values of $f(n,k)$ for small $n$ and $k$ are shown in Table~\ref{table:stars}.

\begin{table}[!ht]
\centering
\begin{tabular}{ c|ccccccccc }
  $n\downarrow\mid k\rightarrow$ & $3$ & $4$ & $5$ & $6$ & $7$ & $8$ & $9$ & $10$ & $11$ \\
  \hline
  $2$ & $1/3$ & $1/3$ & $1/3$ & $1/3$ & $1/3$ & $1/3$ & $1/3$ & $1/3$ & $1/3$ \\
  $3$ & $1/4$ & $1/4$ & $1/5$ & $1/5$ & $1/5$ & $1/5$ & $1/5$ & $1/5$ & $1/5$ \\
  $4$ & $1/5$ & $1/5$ & $1/6$ & $1/6$ & $1/7$ & $1/7$ & $1/7$ & $1/7$ & $1/7$ \\
  $5$ & $1/6$ & $1/6$ & $1/7$ & $1/7$ & $1/8$ & $1/8$ & $1/9$ & $1/9$ & $1/9$ \\
  $6$ & $1/7$ & $1/7$ & $1/8$ & $1/8$ & $1/9$ & $1/9$ & $1/10$ & $1/10$ & $1/11$ \\
  $7$ & $1/8$ & $1/8$ & $1/9$ & $1/9$ & $1/10$ & $1/10$ & $1/11$ & $1/11$ & $1/12$ \\
  $8$ & $1/9$ & $1/9$ & $1/10$ & $1/10$ & $1/11$ & $1/11$ & $1/12$ & $1/12$ & $1/13$ \\
  $9$ & $1/10$ & $1/10$ & $1/11$ & $1/11$ & $1/12$ & $1/12$ & $1/13$ & $1/13$ & $1/14$ \\
  $10$ & $1/11$ & $1/11$ & $1/12$ & $1/12$ & $1/13$ & $1/13$ & $1/14$ & $1/14$ & $1/15$ \\
\end{tabular}
\caption{Some values of $f(n,k)$, the best egalitarian welfare guarantee for $n$ agents and a star with $k$ edges.}
\label{table:stars}
\end{table}

\begin{proof}
We proceed by induction on $n$. For the base case $n=2$, the lower bound of $1/3$ follows from Theorem~\ref{thm:any-agent-egal}. 
To see that $1/3$ is also an upper bound, assume that both agents value three of the edges uniformly at exactly $1/3$ and have no value for the remaining $k-3$ edges. 
Any connected allocation with egalitarian welfare greater than $1/3$ would give rise to a connected allocation of a three-edge star with egalitarian welfare greater than $1/3$, which by Theorem~\ref{thm:any-agent-egal} does not exist.

Assume that the statement holds for $n-1$ agents. First, we show that there exists a connected allocation with egalitarian welfare at least $f(n,k)$. 
If $k\geq 2n-1$, this follows immediately from Theorem~\ref{thm:any-agent-egal}. Suppose that $k\leq 2n-2$. We have $n+\lceil k/2\rceil -1 \geq n + k/2 - 1 \geq \frac{k+2}{2}+\frac{k}{2}-1 = k$, or $f(n,k) \leq 1/k$. 
Hence, every agent has value at least $f(n,k)$ for some edge of the star. We choose one such edge for an arbitrary agent and move a knife from its outer endpoint to the center of the star, stopping when the covered part has value $f(n,k)$ for some agent $i$. 
We allocate this piece of cake to agent $i$, and make the cut point a new vertex in the remaining graph, which has value at least $1-f(n,k)$ for each of the remaining agents. 
The remaining graph is still a star with $k$ edges (possibly with a degenerate edge), so by the inductive hypothesis, there exists a connected allocation of the remaining graph to the $n-1$ agents with egalitarian welfare at least $$f(n-1,k)\cdot(1-f(n,k)) = \frac{1}{n+\lceil k/2\rceil -2}\cdot \frac{n+\lceil k/2\rceil -2}{n+\lceil k/2\rceil -1} = \frac{1}{n+\lceil k/2\rceil -1},$$ 
where the first equality follows from the observation that $f(n-1,k) = \frac{1}{n+\lceil k/2\rceil -2}$ for all $k\leq 2n-2$.

Next, we show that $f(n,k)$ is tight for all $n$ and $k$. If $k\geq 2n-1$, this follows from the instance in Theorem~\ref{thm:any-agent-egal} and by adding extra edges of zero value. 
Suppose that $k\leq 2n-2$ and that the agents have identical valuations. 
Each agent has value $f(n,k)$ for $k-1$ of the edges and $1-(k-1)\cdot f(n,k)$ for the $k$th edge, and the values are distributed uniformly across each edge. 
Note that as in the preceding paragraph, we have $f(n,k)\leq 1/k$, and therefore $1-(k-1)\cdot f(n,k)\geq f(n,k)$. 
Assume for contradiction that there is a connected allocation with egalitarian welfare strictly greater than $f(n,k)$. 
Any agent must either receive intervals from at least two edges (perhaps including the $k$th edge), or only receive an interval from the $k$th edge. 
The number of agents of the first type is at most $\lfloor k/2\rfloor$. 
The number of agents of the second type is strictly less than 
$$\frac{1-(k-1)f(n,k)}{f(n,k)} = \frac{\frac{n-\lfloor k/2\rfloor}{n+\lceil k/2\rceil -1}}{\frac{1}{n+\lceil k/2\rceil -1}} = n-\lfloor k/2\rfloor.$$
This means that the total number of agents is strictly less than $\lfloor k/2\rfloor + (n-\lfloor k/2\rfloor) = n$, yielding the desired contradiction.
\end{proof}

\section{Two Agents}
\label{sec:twoagents}

In this section, we focus on the case of two agents. We establish the optimal egalitarian welfare that can be obtained for each graph and derive utility frontiers when the agents may have different entitlements. In addition, we explore the extent to which our guarantees can be improved if we allow more than one connected piece per agent, and also consider approximate equitability.

\subsection{Graph-Specific Guarantees}

Before we can state our results for specific graphs, we need some graph-theoretic terminology.
Recall that a \emph{bridge} of a graph is an edge that is not contained in any cycle. A graph is said to be \emph{bridgeless} if it contains no bridges. 

\begin{definition}
A graph is said to be \emph{almost bridgeless} if we can add an edge so that the resulting graph is bridgeless.
\end{definition}

Note that according to this definition, every connected bridgeless graph with at least two vertices is also almost bridgeless, since we can add a copy of an existing edge.

Next, we define an oriented labeling of a graph.

\begin{definition}
An \emph{oriented labeling} of a graph with $m$ edges is a labeling of the edges with numbers $1,2,\dots,m$, using each number exactly once, together with a labeling of one endpoint of each edge $i$ with $i^-$ and the other endpoint with $i^+$ (so each vertex receives a number of labels equal to the number of edges adjacent to it). 
An oriented labeling is said to be \emph{contiguous} if:
\begin{itemize}
    \item For each $2\leq i\leq m$, the edges labeled $1,2,\dots,i-1$ form a connected subgraph, and the vertex labeled $i^-$ belongs to one of these edges.
    \item For each $1\leq i\leq m-1$, the edges labeled $i+1,i+2,\dots,m$ form a connected subgraph, and the vertex labeled $i^+$ belongs to one of these edges.
\end{itemize}
\end{definition}

It turns out that a graph admitting a contiguous oriented labeling is equivalent to it being almost bridgeless.

\begin{lemma}
\label{lem:almost-bridgeless}
A graph is almost bridgeless if and only if it admits a contiguous oriented labeling.
\end{lemma}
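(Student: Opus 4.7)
The plan is to prove both directions of the equivalence. For the $(\Leftarrow)$ direction, suppose $G$ admits a contiguous oriented labeling with edges $1,\ldots,m$. I would add an edge $e^*$ connecting the vertex labeled $1^-$ to the vertex labeled $m^+$, and verify that $G+e^*$ is bridgeless, which shows $G$ is almost bridgeless. For any edge $i$ with $2\le i\le m-1$, the prefix subgraph on edges $\{1,\ldots,i-1\}$ is connected and contains both $i^-$ and $1^-$, yielding a path $Q^-$ between them; symmetrically, the suffix subgraph gives a path $Q^+$ from $i^+$ to $m^+$. Concatenating $Q^-$, $Q^+$, edge $i$, and $e^*$ produces a cycle through $i$; edges $1$ and $m$, as well as $e^*$ itself, are treated analogously using only the suffix or only the prefix together with $e^*$.

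For the $(\Rightarrow)$ direction, I would first observe that $G$ is almost bridgeless if and only if its bridge tree (formed by contracting each maximal bridgeless subgraph to a point) is itself a path: adding $e^*=st$ destroys exactly the bridges lying on the unique bridge-tree path between the blocks containing $s$ and $t$, so $G+e^*$ is bridgeless precisely when the bridge tree equals that path. Writing the bridge tree as $B_1,b_1,B_2,b_2,\ldots,b_{p-1},B_p$, I would construct the desired labeling by concatenating block labelings separated by bridges: the edges of $B_1$ first, then $b_1$, then the edges of $B_2$, and so on. For the prefix and suffix conditions to hold at each block boundary, the sub-labeling of $B_i$ must begin with an edge whose $^-$ endpoint is the attachment $u_i$ of $b_{i-1}$ (for $i\ge 2$) and end with an edge whose $^+$ endpoint is the attachment $v_i$ of $b_i$ (for $i\le p-1$), with the bridge $b_i$ oriented $b_i^-=v_i$, $b_i^+=u_{i+1}$. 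This reduces the proof to a key sub-claim: every bridgeless graph $B$ with at least two vertices admits, for any specified $u,v\in V(B)$ (possibly $u=v$), a contiguous oriented labeling with $e_1^-=u$ and $e_{|E(B)|}^+=v$.

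I would prove the sub-claim by induction on $|E(B)|$ via ear decomposition. The base cases are simple cycles, where one labels edges going from $u$ to $v$ along one arc and then back along the other (or all the way around if $u=v$), with orientations chosen to meet the boundary conditions. For the inductive step, fix an ear decomposition of $B$ whose initial cycle contains $u$ (and also $v$, if they lie on a common cycle) so that $u,v$ are never internal to any ear; remove the last ear $P$, apply the inductive hypothesis to $B-P$ (still bridgeless and containing $u,v$), and insert the edges of $P$, traversed as a path from one endpoint to the other, at a suitable position in the resulting labeling. The main obstacle is finding such an insertion position: one must locate a $j$ and an orientation of $P$ so that one endpoint of $P$ appears in the prefix $\{e_1,\ldots,e_j\}$ while the other appears in the suffix $\{e_{j+1},\ldots\}$. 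A case analysis exploiting that every vertex of the bridgeless $B-P$ has at least two incident edges shows that at least one of the two orientations of $P$ always yields a valid $j$; the remaining subtle point, handling $u,v$ that do not share a common cycle in $B$, is resolved by reorganizing the ear decomposition so that the ear introducing the later-appearing vertex is not the one chosen for removal.
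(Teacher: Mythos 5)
Your $(\Leftarrow)$ direction coincides with the paper's: add an edge joining the vertices labeled $1^-$ and $m^+$, and exhibit a cycle through each edge $i$ consisting of a path in the prefix from $1^-$ to $i^-$, the edge $i$ itself, a path in the suffix from $i^+$ to $m^+$, and the new edge. Your $(\Rightarrow)$ direction, however, takes a genuinely different route. The paper runs one global construction: starting from a path between the endpoints $u,v$ of the bridge-killing edge, it grows an ear decomposition of the whole graph, inserts each new ear into a single evolving edge order, and then verifies contiguity by two inductions over the labels. You instead factor the problem through the bridge tree: $G$ is almost bridgeless iff its maximal bridgeless blocks form a path $B_1,b_1,\dots,b_{p-1},B_p$, the block labelings are concatenated with the bridges in between, and everything reduces to the sub-claim that a bridgeless graph admits a contiguous oriented labeling with prescribed first source $u$ and last sink $v$. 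This modularization is sound (the boundary conditions you impose at the attachment vertices are exactly what contiguity requires across a bridge, and degenerate single-vertex blocks cause no trouble), and it isolates the real content of the lemma more cleanly than the paper's monolithic argument. Your insertion argument---using that every vertex of the bridgeless $B-P$ has degree at least two, so its first and last incident labels differ---is also correct and is essentially the same device the paper uses implicitly.

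There is, however, a genuine gap at exactly the point you flag as subtle. When $u$ and $v$ do not lie on a common cycle of the bridgeless graph $B$, you propose to reorganize the ear decomposition so that the ear introducing the later-appearing vertex is never the one removed. This is not always possible. Take $B$ to be two triangles sharing a single vertex $w$, with $u$ a non-$w$ vertex of one triangle and $v$ a non-$w$ vertex of the other; this configuration does arise as a block in your bridge-tree reduction, with $u,v$ the attachment points of consecutive bridges. Any ear decomposition whose initial cycle contains $u$ consists of that triangle plus exactly one ear, namely the other triangle, and that ear introduces $v$ as an internal vertex; there is no alternative ear to remove, so the inductive hypothesis cannot be applied to a subgraph containing both $u$ and $v$. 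The sub-claim is nonetheless true, and the hole is patchable: when the removed ear $P$ contains $v$ internally, apply the induction to $B-P$ with targets $u$ and an attachment vertex of $P$, then split $P$ at $v$ into two paths, orient both toward $v$, and place one of them at the very end of the order---the same device as your cycle base case with $u\neq v$. As written, though, the inductive step does not go through.
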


\begin{proof}
($\Leftarrow$) Assume that a graph admits a contiguous oriented labeling. Add an edge between the vertices labeled $1^-$ and $m^+$. We claim that the resulting graph is bridgeless. 
Since the vertices $1^-$ and $m^+$ are connected in the original graph, the new edge is part of a cycle. Now, consider any edge in the original graph; assume that the edge has label $i$. 
The vertices $1^-$ and $i^-$ are connected by a path that only goes through edges between $1$ and $i-1$. Likewise, the vertices $i^+$ and $m^+$ are connected by a path that only goes through edges between $i+1$ and $m$. 
Hence, the edge $i$ belongs to a cycle that goes through the two paths, the edge between $m^+$ and $1^-$, and itself. 

($\Rightarrow$) Assume that a graph is almost bridgeless. We will label all edges with labels $1,2,\dots,m$ and orient each edge in one direction (the source of edge $i$ corresponds to the vertex $i^-$ and the sink to the vertex $i^+$) so that the labeling is a contiguous oriented labeling. 

Suppose that the graph becomes bridgeless if we add an edge $uv$. Consider a path from $u$ to $v$, and sort the edges and orient them along this path. We will iteratively construct \emph{ears} until all edges are used. 
Each ear is a path starting at a vertex of a previous ear and ending at a vertex of a previous ear (possibly the same as the former vertex, in which case the path becomes a cycle) but not going through any other vertex of a previous ear; the only exception is the first ear, which is the path from $u$ to $v$. 
Suppose that we have constructed some ears, and not all edges have been used. If there are edges that connect only vertices in the existing ears, we make each such edge into a new ear. 
If some edges still remain after this process, then since the graph is connected, there must be an edge $xy$ such that $x$ belongs to an existing ear but $y$ does not. 
By assumption, $xy$ is contained in a cycle if the edge $uv$ is added, so we can follow the edges in this cycle until we reach a vertex $z$ in an existing ear for the first time (possibly $z=x$). 
Since we stop if we reach either $u$ or $v$, edge $uv$ cannot be part of this trail, so we have a new ear. Assume without loss of generality that either $x=u$, or the first edge directed into $x$ appears no later than the first edge directed into $z$ in the current edge order, or both. Orient the edges of the new ear along the trail from $x$ to $z$. 
If $x=u$, insert these edges consecutively at the beginning of the order. Else, insert them consecutively after the first edge directed into $x$. 
After all edges have been added, label them from $1$ to $m$ according to the final order. Observe that the edge with label $1$ is always adjacent to $u$, and the edge with label $m$ is always adjacent to $v$.

We claim that the resulting oriented labeling is contiguous. First, we show that for each $1\leq i\leq m$, the edges labeled $1,2,\dots,i$ form a connected subgraph. 
We proceed by induction on $i$, with the base case $i=1$ holding trivially. Assume that the edges $1,2,\dots,i-1$ form a connected subgraph for some $i\geq 2$. 
If edge $i$ is not the first edge in its ear, its predecessor in its ear has label at most $i-1$, so the induction hypothesis implies that the edges $1,2,\dots,i$ form a connected subgraph. 
Moreover, in this case, vertex $i^-$ belongs to the predecessor edge with label at most $i-1$. Suppose now that edge $i$ is the first edge in its ear, and assume that the edge is directed from $x$ to $y$. 
If $x=u$, then since $i\geq 2$, the edge with label $1$ is an edge with a lower label that is adjacent to $u$. Else, the first edge directed into $x$ has label at most $i-1$. 
In either case, the induction hypothesis implies that the edges $1,2,\dots,i$ form a connected subgraph, and vertex $i^-=x$ is adjacent to a vertex with label at most $i-1$. 
This completes the induction and moreover shows that for each $2\leq i\leq m$, vertex $i^-$ belongs to one of the edges $1,2,\dots,i-1$. 

Next, we show that for each $1\leq i\leq m$, the edges labeled $i,i+1,\dots,m$ form a connected subgraph. 
We proceed by downward induction on $i$, with the base case $i=m$ holding trivially. Assume that the edges $i+1,i+2,\dots,m$ form a connected subgraph for some $i<m$. 
If edge $i$ is not the last edge in its ear, its successor in its ear has label at least $i+1$, so the induction hypothesis implies that the edges $i,i+1,\dots,m$ form a connected subgraph.
Moreover, in this case, vertex $i^+$ belongs to the successor edge with label at least $i+1$. Suppose now that edge $i$ is the last edge in its ear, and assume that the edge is directed from $y$ to $z$. 
If $z=v$, then since $i<m$, the edge with label $m$ is an edge with a higher label that is adjacent to $v$. Suppose that $z\neq v$, which also implies that $i$ does not belong to the first ear. 
Consider the moment when we insert the ear that contains $i$, and assume that this ear begins at vertex $x$. 
If $x=u$, then the edges of this ear are placed at the beginning of the order; since $z$ appears in a previous ear, there is an edge adjacent to it that comes after edge $yz$ in the order. 
Suppose therefore that $x\neq u$, so that at the moment before we insert the ear containing $i$, the first edge into $x$ appears no later than the first edge directed into $z$ in the order. We place the edges of the new ear after the first edge into $x$. 
If $x\neq z$, there is an edge adjacent to $z$ that comes after this ear in the order. Else, $x=z$, and this vertex is different from $u$ and $v$. 
Consider the earliest ear that contains $z$, and note that in this ear, there is an edge into $z$ and another edge out of $z$; let $i_1$ and $i_2$ be the label of the two edges respectively. 
The first edge into $z$ appears no later than $i_1$, so the ear containing $i$ appears before $i_2$ in the ordering. Hence, there is an edge with label at least $i+1$ adjacent to $z$. This completes the induction. 
It also follows from our argument that for each $1\leq i\leq m-1$, the vertex $i^+$ belongs to one of the edges $i+1,i+2,\dots,m$. Combined with the previous paragraph, we find that our oriented labeling is contiguous, as claimed.
\end{proof}

Given a graph $G$ with $k$ vertices, a \emph{bipolar numbering} of $G$ is a labeling of the \emph{vertices} with numbers $1,2,\dots,k$, with each number used exactly once, such that every vertex with label greater than $1$ has a neighbor with a smaller label and each vertex with label smaller than $k$ has a neighbor with a larger label.
Bil\`{o} et al.~\citet{BiloCaFl19} characterized the class of graphs that admit a bipolar numbering as the graphs with the property that if the vertices of the graph represent indivisible items of possibly different values to the two agents, there always exists a connected `envy-free up to one item' allocation. 
We show that the class of graphs that admit a bipolar numbering forms a strict subclass of the almost bridgeless graphs (which, by Lemma~\ref{lem:almost-bridgeless}, is equivalent to the class of graphs that admit a contiguous oriented labeling).

\begin{proposition}
Any graph that admits a bipolar numbering is almost bridgeless, but the converse does not hold.
\end{proposition}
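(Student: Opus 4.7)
The plan is to prove the two directions separately: for the forward implication I would give a direct construction of a single edge whose addition makes the graph bridgeless, and for the reverse I would exhibit a concrete small graph that is almost bridgeless but cannot admit any bipolar numbering. The forward direction rides on the fact that in a bipolar numbering every non-extreme vertex has both a smaller- and a larger-labeled neighbor, so one can always walk toward either extreme along monotone labels.

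For the forward direction, let $G$ have a bipolar numbering with labels $1, 2, \ldots, k$, and write $\ell(x)$ for the label of vertex $x$. I would add an edge $e^\ast$ between the vertices labeled $1$ and $k$ and show that $G + e^\ast$ is bridgeless. The edge $e^\ast$ itself lies on a cycle because $G$ is connected, so any $G$-path between the vertices labeled $1$ and $k$ closes up with $e^\ast$. For any original edge $uv$ with $\ell(u) < \ell(v)$, I would walk from $u$ to the vertex labeled $1$ through strictly decreasing labels and from $v$ to the vertex labeled $k$ through strictly increasing labels. The first walk stays within labels $\leq \ell(u)$ and the second within labels $\geq \ell(v)$, so the two walks are vertex-disjoint, and neither traverses the edge $uv$ itself because the opposite endpoint of $uv$ lies outside that walk's label range. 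Concatenating the two walks with $e^\ast$ and then with $uv$ yields a cycle through $uv$, showing it is not a bridge of $G+e^\ast$.

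For the reverse direction, I would take $G$ to consist of three triangles meeting at a common vertex, with vertex set $\{a, b, c, d, e, f, g\}$ and edges forming the triangles $abc$, $ade$, and $afg$. This graph is already bridgeless and so is trivially almost bridgeless. To rule out a bipolar numbering, the key observation is that within each arm $\{b,c\}, \{d,e\}, \{f,g\}$, a vertex's only neighbors are $a$ and its partner. Writing $x, y$ for the arm vertices with $\ell(x) < \ell(y)$, the requirements that $x$ has a smaller-labeled neighbor and $y$ has a larger-labeled neighbor translate into the conditions $\ell(a) < \ell(x)$ and $\ell(a) > \ell(y)$ respectively, which cannot both hold. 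Hence each arm must avoid at least one of the two constraints by having $x$ be the source (label $1$) or $y$ be the sink (label $k$). With three arms but only one source and one sink, two arms would be forced to share the source or share the sink, which is impossible.

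The step I expect to require the most care is the vertex-disjointness check in the forward direction, which is precisely what justifies assembling the two monotone walks and the edges $e^\ast$ and $uv$ into a \emph{simple} cycle through $uv$; the separation of the walks into the disjoint label ranges $[1, \ell(u)]$ and $[\ell(v), k]$ is the decisive technical point. The rest of the argument, including the pigeonhole step in the reverse direction, should be essentially routine once the three-triangle construction is in hand.
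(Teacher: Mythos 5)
Your proof is correct. The forward direction is essentially identical to the paper's: both add an edge joining the vertices labeled $1$ and $k$ and close a cycle through any original edge $uv$ by descending from the lower-labeled endpoint to $1$ and ascending from the higher-labeled endpoint to $k$; your explicit check that the two monotone walks live in the disjoint label ranges $[1,\ell(u)]$ and $[\ell(v),k]$ (and hence form a simple cycle) is a point the paper leaves implicit, and it is worth making. For the converse, you choose a different witness than the paper: the paper exhibits two bridgeless graphs (a triangle with doubled pendant edges at each corner, and a triangle with a small triangle glued at each corner) and simply asserts that ``one can check'' neither admits a bipolar numbering, whereas you take three triangles sharing a single vertex $a$ and actually carry out the verification. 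Your pigeonhole argument is sound: in each arm $\{x,y\}$ with $\ell(x)<\ell(y)$, avoiding the contradictory pair $\ell(a)<\ell(x)$ and $\ell(a)>\ell(y)$ forces $\ell(x)=1$ or $\ell(y)=k$, and three vertex-disjoint arms cannot all claim one of the two extreme labels. The net effect is that your converse is, if anything, more complete than the paper's, at the cost of using a graph not among the paper's figures.
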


\begin{proof}
Assume that a graph admits a bipolar numbering with the vertices labeled $1,2,\dots,k$. 
This means that for any vertex $i$, there exists a path from $1$ to $i$ that only goes through vertices $1,2,\dots,i$, and a path from $i$ to $k$ that only goes through vertices $i,i+1,\dots,k$. 
Add an edge between vertices $1$ and $k$. We will show that the resulting graph is bridgeless, i.e., every edge is contained in a cycle. This is clear for the new edge. 
For any edge in the original graph between vertices $i$ and $j$ with $i<j$, we can construct a cycle containing it by following a path from $j$ to $k$ that only uses vertices $j,j+1,\dots,k$, traversing the added edge from $k$ to $1$, and following a path from $1$ to $i$ that only uses vertices $1,2,\dots,i$.

To show that the converse does not hold, consider the graphs shown in Figure~\ref{fig:flowers}. 
Since every edge in both graphs is contained in a cycle, both graphs are bridgeless and therefore almost bridgeless. On the other hand, one can check that neither graph admits a bipolar numbering.
\end{proof}

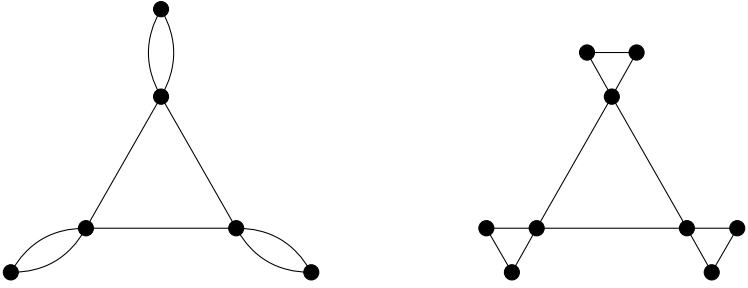
\begin{figure}[!ht]
\centering
\begin{tikzpicture}
\draw (1,0.58) -- (3,0.58) -- (2,2.31) -- (1,0.58);
\draw [fill] (1,0.58) circle [radius = 0.1];
\draw [fill] (3,0.58) circle [radius = 0.1];
\draw [fill] (2,2.31) circle [radius = 0.1];
\draw [fill] (0,0) circle [radius = 0.1];
\draw [fill] (4,0) circle [radius = 0.1];
\draw [fill] (2,3.46) circle [radius = 0.1];
\draw (1,0.58) to[bend right] (0,0);
\draw (1,0.58) to[bend left] (0,0);
\draw (3,0.58) to[bend right] (4,0);
\draw (3,0.58) to[bend left] (4,0);
\draw (2,2.31) to[bend right] (2,3.46);
\draw (2,2.31) to[bend left] (2,3.46);

\draw (7,0.58) -- (9,0.58) -- (8,2.31) -- (7,0.58);
\draw [fill] (7,0.58) circle [radius = 0.1];
\draw [fill] (9,0.58) circle [radius = 0.1];
\draw [fill] (8,2.31) circle [radius = 0.1];
\draw [fill] (7.67,2.89) circle [radius = 0.1];
\draw [fill] (8.33,2.89) circle [radius = 0.1];
\draw (8,2.31) -- (7.67,2.89) -- (8.33,2.89) -- (8,2.31);
\draw [fill] (6.33,0.58) circle [radius = 0.1];
\draw [fill] (6.67,0) circle [radius = 0.1];
\draw (7,0.58) -- (6.33,0.58) -- (6.67,0) -- (7,0.58);
\draw [fill] (9.67,0.58) circle [radius = 0.1];
\draw [fill] (9.33,0) circle [radius = 0.1];
\draw (9,0.58) -- (9.67,0.58) -- (9.33,0) -- (9,0.58);
\end{tikzpicture}
\caption{Examples of graphs that are almost bridgeless but do not admit a bipolar numbering.}
\label{fig:flowers}
\end{figure}

We are now ready to show our classification result: the optimal egalitarian welfare that can always be obtained for a graph is $1/2$ if the graph is almost bridgeless, and $1/3$ otherwise. The former is shown in Theorem~\ref{thm:two-agent-1/2}, while the latter follows from Theorems~\ref{thm:any-agent-egal} and \ref{thm:two-agent-1/3}.

\begin{theorem}
\label{thm:two-agent-1/2}
For $n=2$ and any almost bridgeless graph $G$, there exists a connected proportional allocation.
\end{theorem}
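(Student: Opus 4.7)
The plan is to reduce the problem to a one-dimensional cut-and-choose argument by exploiting the contiguous oriented labeling guaranteed by Lemma~\ref{lem:almost-bridgeless}. First I would invoke that lemma to obtain a contiguous oriented labeling of the edges $1,2,\dots,m$ of $G$, with endpoint labels $i^-$ and $i^+$ for each edge $i$. This labeling gives a canonical linear order in which to sweep a knife through the graph.

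Next, I would define, for each parameter $t\in[0,m]$, a partition of the cake into two pieces $P(t)$ and $Q(t)$ as follows. Writing $t=(i-1)+s$ with integer $i\in\{1,\dots,m\}$ and $s\in[0,1]$, let $P(t)$ consist of the edges labeled $1,\dots,i-1$ in full together with the fraction $s$ of edge $i$ starting from the endpoint $i^-$ (parametrizing each edge so that it has unit length); let $Q(t)$ be the complement. The key verification is that both $P(t)$ and $Q(t)$ are connected for every $t$: by the first contiguity property, the edges $1,\dots,i-1$ form a connected subgraph and $i^-$ lies on one of these edges, so appending the initial segment of edge $i$ to this subgraph keeps $P(t)$ connected; symmetrically, by the second contiguity property, $Q(t)$ is connected. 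The special cases $t=0$ or $t=m$ (empty $P$ or empty $Q$) and $s\in\{0,1\}$ are handled trivially.

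With this continuous family of connected bipartitions in hand, I would run the standard cut-and-choose protocol. Since $f_1(P(t))$ is continuous in $t$ with $f_1(P(0))=0$ and $f_1(P(m))=1$ (using additivity and divisibility of $f_1$), the intermediate value theorem yields some $t^*$ with $f_1(P(t^*))=f_1(Q(t^*))=1/2$. Allocate to agent $2$ whichever of $P(t^*)$ or $Q(t^*)$ she prefers, and give the other piece to agent $1$. Agent $1$ receives value exactly $1/2$, and agent $2$ receives value at least $1/2$ by normalization and additivity, so the allocation is proportional and connected.

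The main (modest) obstacle is the connectivity check for $P(t)$ and $Q(t)$ at every intermediate $t$; once that is secured by the two contiguity conditions, the rest of the argument is essentially classical moving-knife reasoning. Everything else — continuity of the valuations, application of the intermediate value theorem, and the cut-and-choose step — is routine given the additive, divisible, normalized assumptions from Section~\ref{sec:prelim}.
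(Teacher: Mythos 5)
Your proof is correct and follows essentially the same route as the paper: both invoke Lemma~\ref{lem:almost-bridgeless} to obtain a contiguous oriented labeling and sweep a knife along it, with the two contiguity conditions guaranteeing that the covered and uncovered parts are each connected at every moment. The only cosmetic difference is that you cut at agent 1's half-value point and let agent 2 choose, whereas the paper stops the knife the first time \emph{either} agent values the covered part exactly $1/2$ and gives that agent the covered piece; both yield a connected proportional allocation.
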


\begin{proof}
Suppose that $G$ is almost bridgeless. By Lemma~\ref{lem:almost-bridgeless}, it admits a contiguous oriented labeling. 
We move a knife over the edges of $G$ in increasing order of the label. For each edge with label $i$, the knife goes from vertex $i^-$ to vertex $i^+$. 
If the knife is currently on edge $i$, we stop when the piece of cake containing edges $1,2,\dots,i-1$, together with the interval of edge $i$ between $i^-$ and the current position of the knife, yields value exactly $1/2$ to one of the agents.
We allocate this piece of cake to the agent who receives value $1/2$, and the remainder of the cake to the other agent. 
Both agents receive value at least $1/2$ and, by definition of the labeling, obtain a connected allocation of the cake.
\end{proof}

\begin{lemma}
\label{lem:threebridges}
Let $F$ be a nonempty set of bridges in a graph. If no path contains all bridges in $F$, there exist three bridges in $F$ such that no path contains all three bridges.
\end{lemma}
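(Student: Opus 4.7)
The plan is to reduce the statement to a clean claim about edges of a tree via the \emph{bridge tree} of $G$, and then prove the tree claim directly. Partition $V(G)$ into bridge-blocks (equivalence classes under ``$u\sim v$ iff some $u$-$v$ path uses no bridge'') and form the auxiliary graph $T$ whose vertices are the bridge-blocks and whose edges are exactly the bridges of $G$, each joining the two blocks of its endpoints. Then $T$ is a tree: it is connected since $G$ is, and acyclic since any cycle in $T$ would yield a cycle in $G$ consisting entirely of bridges. A set $S$ of bridges of $G$ lies on a single path of $G$ if and only if the corresponding edges of $T$ lie on a single path of $T$: projecting a path in $G$ to $T$ gives a trail, and any trail in a tree is a simple path; conversely, any path in $T$ lifts by concatenating simple in-block paths between consecutive bridges, which is valid since bridge-blocks are pairwise vertex-disjoint.

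It therefore suffices to show the following tree statement: if $F$ is a set of edges of a tree $T$ such that no single path of $T$ contains all of $F$, then some three edges of $F$ are not contained in any path of $T$ either. Let $T(F)$ denote the minimal subtree of $T$ containing every edge of $F$ (equivalently, the union of the $T$-paths between endpoints of edges in $F$). Then $F$ lies on a single path of $T$ iff $T(F)$ is itself a path, iff every vertex of $T(F)$ has degree at most $2$ in $T(F)$. By hypothesis, some vertex $v$ has degree $d\ge 3$ in $T(F)$, and so $T(F)-v$ splits into components $C_1,\dots,C_d$. By minimality of $T(F)$, every leaf of $T(F)$ must be an endpoint of some edge in $F$; from this it follows that each $C_i$ contains a leaf of $T(F)$ and hence an edge $e_i\in F$ lying in the branch of $v$ corresponding to $C_i$.

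Choosing $e_1,e_2,e_3\in F$ from three distinct branches, the $T$-path between any endpoint of $e_i$ and any endpoint of $e_j$ ($i\ne j$) must pass through $v$, so $T(\{e_1,e_2,e_3\})$ contains $v$ as a vertex of degree $3$ and is therefore not a path. Hence no single path of $T$ contains all three, completing the proof. The one subtle point is verifying that each branch at $v$ actually contains an edge of $F$ rather than only connector edges of $T(F)$; this is the technical crux, and it relies crucially on the minimality of $T(F)$ forcing every leaf to be an endpoint of an edge of $F$.
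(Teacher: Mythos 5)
Your proof is correct, and it reaches the conclusion by a genuinely different route from the paper's. The paper also reduces to a tree and also exploits a degree-$3$ vertex of a minimal subtree spanning $F$, but its tree is an arbitrary spanning tree of $G$ (which automatically contains every bridge), and the crux of its argument is a direct contradiction: if three bridges from three distinct branches of the degree-$3$ vertex $v$ lay on a common path of $G$, the path would have to traverse two of them ``away from'' $v$ and then return to the $v$-side of the first, forcing that bridge onto a cycle. You instead pass to the bridge tree $T$ (bridge-blocks as vertices, bridges as edges) and establish the structural equivalence that a set of bridges lies on a common path of $G$ if and only if the corresponding edges lie on a common path of $T$; after that, everything is a statement about edges of a tree and the degree-$3$ branch argument finishes immediately. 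Your route costs more setup (the block decomposition and both directions of the projection/lifting correspondence, including the observation that a trail in a tree is a simple path), but it buys a reusable characterization strictly stronger than the lemma and entirely avoids the paper's somewhat delicate ``the path must double back'' case analysis; you also correctly isolate and discharge the one subtle point shared by both arguments, namely that minimality forces each branch at $v$ to contain an edge of $F$. Two harmless imprecisions: a cycle in $T$ lifts to a cycle of $G$ \emph{through} a bridge (via in-block connector paths), not to a cycle ``consisting entirely of bridges''; and the $T$-path between endpoints of $e_i$ and $e_j$ need not pass through $v$ internally if one of those endpoints happens to be $v$ itself, so one should take the leaf endpoints $\ell_i\in C_i$ there---your actual conclusion, that $v$ has degree $3$ in the minimal subtree spanning $e_1,e_2,e_3$, is unaffected.
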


\begin{proof}
Assume that no path contains all bridges in $F$. Since bridges are not contained in cycles, there is a spanning tree $T$ that contains $F$. Let $T'$ be a minimal subtree of $T$ that contains $F$. 
Since $T'$ cannot be a path, it has a vertex $v$ with degree at least $3$. Each of the (at least three) branches of $v$ must contain a bridge from $F$---otherwise the branch can be removed to obtain a smaller subtree than $T'$. 

Let $e_1,e_2,e_3$ be bridges contained in three distinct branches. Suppose for contradiction that they are contained in a path. 
By reversing the direction of the path if necessary, we may assume that the path traverses at least two of the edges away from $v$ with respect to $T'$. 
Assume further that two of these edges are $e_1$ and $e_2$, and that the path traverses $e_1$ before $e_2$. 
After traversing $e_1$, the path must reach another vertex $w$ in $T'$ that lies on the same side as $v$ with respect to both $e_1$ and $e_2$ (possibly the endpoint of $e_1$ or $e_2$ closer to $v$). 
By combining the portion of the path from $e_1$ to $w$ with the path in $T'$ from $w$ to $e_1$, we find that $e_1$ lies on a cycle, a contradiction.
\end{proof}

\begin{theorem}
\label{thm:two-agent-1/3}
For $n=2$ and any graph $G$ that is not almost bridgeless, there exist identical valuation functions of the two agents such that any connected allocation yields egalitarian welfare at most $1/3$.
\end{theorem}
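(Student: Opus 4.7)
The plan is to use Lemma~\ref{lem:threebridges} to extract three ``Y-configured'' bridges, concentrate all value on them, and then exploit the rigidity of connected pieces crossing bridges to derive a contradiction. To begin, I would observe that $G$ has at least one bridge (a bridgeless graph is almost bridgeless) and that the bridges of $G$ cannot all lie on a single path in $G$: otherwise, adding an edge joining the two endpoints of such a path would place every bridge on a cycle and make the graph bridgeless, contradicting the hypothesis. Applying Lemma~\ref{lem:threebridges} to the set of all bridges then yields three bridges $e_1, e_2, e_3$ such that no path in $G$ contains all three.

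For the bad instance, both agents would value each $e_i$ uniformly at $1/3$ and value every other edge at $0$. Since each $e_i$ is a bridge, successively removing $e_1, e_2, e_3$ produces exactly four connected components of $G$; using the Y-configuration from the proof of Lemma~\ref{lem:threebridges} (three branches of a degree-$\geq 3$ vertex in the relevant subtree of a spanning tree containing all bridges), I would label these components as a central region $G_0$ and three arms $G_1, G_2, G_3$, with each $e_i$ having one endpoint $v_i$ in $G_0$ and the other $u_i$ in $G_i$. The key structural claim is the following: for any connected piece of cake $A$ and any $i \in \{1,2,3\}$, if $A$ has positive value from $e_i$ and also contains content of positive measure outside $e_i \cup G_i$, then $A \cap e_i$ is a single interval reaching $v_i$. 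The reason is that $e_i$ being a bridge forces the only route from the arm side of $e_i$ to anything outside $e_i \cup G_i$ to pass through $v_i$; connectedness of $A$ then rules out $A \cap e_i$ consisting of two disjoint positive-measure sub-intervals touching opposite endpoints, since the gap between them would prevent crossing $e_i$ and no reroute around the bridge exists.

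To conclude, suppose for contradiction that both $A_1$ and $A_2$ have value greater than $1/3$; each piece must then draw positive value from at least two of $e_1, e_2, e_3$. Assume without loss of generality that $A_1$ has positive value from $e_1$ and $e_2$, so the structural claim supplies positive-measure intervals in $A_1 \cap e_1$ and $A_1 \cap e_2$ reaching $v_1$ and $v_2$ respectively. Since $A_1$ and $A_2$ share only finitely many points, $A_2$ has no positive measure in any neighborhood of $v_1$ on $e_1$, and likewise on $e_2$. The contrapositive of the structural claim then confines $A_2$ entirely to $e_i \cup G_i$ whenever $A_2$ draws positive value from $e_i$ for $i=1,2$, capping its value at $1/3$. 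Consequently $A_2$ can draw positive value only from $e_3$, which also caps its value at $1/3$, contradicting our assumption.

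The main obstacle will be making the structural claim airtight: one must rigorously rule out the two-interval configuration of $A \cap e_i$ on a bridge, leveraging that the arm $G_i$'s only link to the rest of $G$ is $e_i$ itself, and carefully handle the degenerate situation where the two pieces share $v_i$ as a single point of zero measure. Once this geometric fact is in place, the remainder of the argument is bookkeeping.
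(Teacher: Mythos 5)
Your proposal is correct and follows essentially the same route as the paper: reduce to Lemma~\ref{lem:threebridges} via the observation that the bridges cannot all lie on one path, concentrate value $1/3$ on each of the three resulting bridges, and use the fact that a connected piece drawing value from a bridge while extending beyond that bridge's arm must occupy a positive-length interval at the bridge's central endpoint, so that two agents each needing two such bridges would require four bridges. Your ``structural claim'' is exactly the paper's assertion that each such interval must contain the endpoint on the side of the other two bridges, just spelled out in more detail.
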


\begin{proof}
Suppose that $G$ is not almost bridgeless. Then no path can contain all bridges of $G$: if there exists such a path, we can eliminate all bridges by adding an edge that connects the endpoints of this path.
By Lemma~\ref{lem:threebridges}, there exist three bridges of $G$ such that no path contains all three bridges. 
For each of the three bridges, the other two bridges must lie on the same side of it, since otherwise we can construct a path that contains all three bridges. 

Assume that both agents value each of the three bridges exactly $1/3$ (and every other edge $0$), and the value is distributed uniformly within each bridge. 
Suppose for contradiction that there exists a connected allocation with egalitarian welfare strictly greater than $1/3$. 
This means that each agent must receive intervals from at least two bridges. However, when an agent receives intervals from two bridges, each interval must contain the endpoint of the bridge that is on the same side as the other two bridges. 
This is impossible since there are only three bridges, yielding the desired contradiction.
\end{proof}

\subsection{Utility Frontiers}

In this subsection, we establish the frontiers of the utilities that we can guarantee to the two agents regardless of the graph, assuming that the agents may have different entitlements. 
We begin by observing that the cut-and-choose protocol allows us to find an allocation that gives utility $1/2$ to the first agent and $1/3$ to the second agent; this generalizes the case $n=2$ of Theorem~\ref{thm:any-agent-egal}.

\begin{theorem}
\label{thm:two-agent-fixed}
For $n=2$ and any graph $G$, there exists a connected allocation such that the first agent receives value at least $1/2$ and the second agent receives value at least $1/3$. 
\end{theorem}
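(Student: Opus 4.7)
The plan is to first reduce $G$ to a tree $T$ (breaking cycles one at a time as in the proof of Theorem~\ref{thm:any-agent-egal}), root $T$ at an arbitrary vertex, and descend from the root to find a vertex $v$ satisfying $f_1(T_v) \geq 1/2$ while $f_1(T_c) < 1/2$ for every child $c$ of $v$. For each child $c$, write $B_c$ for the child branch $[\text{edge } vc] \cup T_c$. The argument then splits depending on whether some $B_c$ has $f_1(B_c) \geq 1/2$.

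In the first case, when some child branch $B_c$ has $f_1(B_c) \geq 1/2$, I would apply a standard moving-knife argument on the edge $vc$: since $f_1(T_c) < 1/2 \leq f_1(B_c)$, continuity yields a point $x$ on $vc$ such that $A := T_c \cup [c,x]$ has $f_1(A) = 1/2$ exactly, and both $A$ and $T \setminus A$ are connected. If $f_2(A) \leq 2/3$, allocate $A$ to agent~1 and $T \setminus A$ to agent~2; otherwise $f_2(A) > 2/3$, so swap the allocation. Agent~1 receives a piece of $f_1$-value exactly $1/2$ and agent~2 a piece of $f_2$-value at least $1/3$ in either sub-case.

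In the second case, every child branch has $f_1(B_c) < 1/2$, and I would shift to considering the full set $\mathcal B$ of branches of $v$, i.e., the connected components of $T \setminus \{v\}$ (including the upper branch through $v$'s parent if $v$ is not the root; this upper branch has $f_1$-value $1 - f_1(T_v) \leq 1/2$). All branches in $\mathcal B$ have $f_1$-value at most $1/2$ and their $f_1$-values (and $f_2$-values) together sum to $1$. The key reduction is a combinatorial claim: there exists $S \subseteq \mathcal B$ with $\sum_{B \in S} f_1(B) \geq 1/2$ and $\sum_{B \in S} f_2(B) \leq 2/3$. Given such an $S$, allocating $A_1 := \{v\} \cup \bigcup_{B \in S} B$ and $A_2 := \{v\} \cup \bigcup_{B \notin S} B$ yields two connected pieces meeting only at $v$ and satisfying the required value bounds.

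The main obstacle is proving the combinatorial claim; my approach is a short contradiction argument. Suppose every $S \subseteq \mathcal B$ with $\sum_{B \in S} f_1(B) \geq 1/2$ has $\sum_{B \in S} f_2(B) > 2/3$. Applying this to $S = \mathcal B \setminus \{B_i\}$ for each $i$ (valid since $f_1(B_i) \leq 1/2$) forces $f_2(B_i) < 1/3$ for every branch $B_i$. On the other hand, relabeling the branches so that $f_1(B_1) \leq f_1(B_2) \leq \ldots \leq f_1(B_k)$ and letting $t^*$ be the smallest index where the cumulative $f_1$-sum first reaches $1/2$, both $\{B_1, \ldots, B_{t^*}\}$ and $\{B_{t^*}, \ldots, B_k\}$ have $f_1$-sum at least $1/2$; applying the assumption to each (together with the fact that these two sets overlap in the single element $B_{t^*}$) forces $f_2(B_{t^*}) > 2/3 - 1/3 = 1/3$, contradicting the earlier uniform bound.
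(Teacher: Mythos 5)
Your proof is correct, but it takes a genuinely different route from the paper's. The paper's argument is a two-line reduction to cut-and-choose: apply Theorem~\ref{thm:any-agent-egal} with $n=2$ to two copies of agent~2's valuation, obtaining a partition into two connected pieces each worth at least $1/3$ to agent~2, and then let agent~1 pick her preferred piece, which is worth at least $1/2$ to her. You instead build the partition directly: you locate a minimal subtree of $f_1$-value at least $1/2$, handle the ``heavy single branch'' case with a moving knife (where the exact cut $f_1(A)=1/2$ lets you give either side to agent~1 and assign agent~2 whichever side she values at least $1/3$), and reduce the remaining case to a combinatorial claim about grouping the branches around $v$, each of $f_1$-value at most $1/2$, into a set $S$ with $\sum_{B\in S}f_1(B)\geq 1/2$ and $\sum_{B\in S}f_2(B)\leq 2/3$. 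Your contradiction argument for that claim checks out: the complements $\mathcal{B}\setminus\{B_i\}$ force $f_2(B_i)<1/3$ for every branch, while the prefix $\{B_1,\dots,B_{t^*}\}$ and suffix $\{B_{t^*},\dots,B_k\}$ both have $f_1$-sum at least $1/2$ and overlap only in $B_{t^*}$, forcing $f_2(B_{t^*})>1/3$. (The sorting by $f_1$-value is not actually needed; any ordering of the branches works.) What the paper's route buys is brevity and reuse of existing machinery---it is the same idea that underlies Lemma~\ref{lem:extract-piece} and the later results on utility frontiers. What your route buys is a self-contained construction that never invokes Theorem~\ref{thm:any-agent-egal}, together with a clean standalone combinatorial observation in Case~2; the price is that you essentially re-derive the tree/branch decomposition that the paper already set up.
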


\begin{proof}
By Theorem~\ref{thm:any-agent-egal}, there exists a partition of the cake into two connected pieces such that the second agent values both pieces at least $1/3$.
The first agent can then simply choose the piece that she prefers and obtain value at least $1/2$.
\end{proof}

The next proposition follows from Theorem~\ref{thm:any-agent-egal}.

\begin{proposition}
\label{prop:two-agent-1/3}
For $n=2$, there exists a graph $G$ and identical valuations of the two agents such that any connected allocation yields egalitarian welfare at most $1/3$.
\end{proposition}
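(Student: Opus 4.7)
The plan is to obtain this proposition as a direct specialization of the upper bound half of Theorem~\ref{thm:any-agent-egal}. That theorem already produces, for every $n$, a graph together with identical valuations on which no connected allocation can achieve egalitarian welfare greater than $\frac{1}{2n-1}$. Substituting $n=2$ gives precisely the bound $1/3$ claimed here, so the proof is essentially a single sentence of invocation.

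More concretely, I would take $G$ to be the star with $2n-1=3$ edges, and let both agents share the common valuation that places mass $1/3$ uniformly on each of the three edges. Theorem~\ref{thm:any-agent-egal} (second part) then says that any connected allocation must give some agent value at most $1/3$, which is exactly the statement to be proved.

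There is essentially no obstacle, since the construction and the argument—that in a connected allocation among two agents some agent can touch at most one edge of the star, and hence receive value at most $1/3$—have already been carried out in the proof of Theorem~\ref{thm:any-agent-egal}. The only thing worth noting is that Theorem~\ref{thm:any-agent-egal} is stated in a form general enough to yield this proposition immediately; nothing about the two-agent case needs to be re-derived.
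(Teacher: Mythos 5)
Your proposal is correct and matches the paper exactly: the paper also obtains this proposition as an immediate consequence of the second part of Theorem~\ref{thm:any-agent-egal} with $n=2$, using the star with three edges and identical uniform valuations. (Your parenthetical gloss of the star argument is slightly off---the actual contradiction is that each agent getting value more than $1/3$ would need two edges essentially to herself, forcing at least four edges---but since you are only invoking the already-proved theorem, this does not affect the validity of your proof.)
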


To complete the utility frontier, we show that if we are required to give a utility of more than $1/2$ to the first agent, it may be impossible to provide any nontrivial guarantee for the second agent.

\begin{proposition}
\label{prop:two-agent-1/2}
Let $\alpha>1/2$ and $\beta>0$. There exists an instance with $n=2$ such that no connected allocation yields value at least $\alpha$ to the first agent and at least $\beta$ to the second agent.
\end{proposition}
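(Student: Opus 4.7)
The plan is to exhibit a very simple instance, using only a single edge as the graph, and rely on carefully chosen asymmetric valuations to make the impossibility immediate. The guiding observation is that on $[0,1]$, any subinterval of length at least $\alpha>1/2$ is forced to contain the central region $[1-\alpha,\alpha]$ (which has positive length $2\alpha-1$); so if agent 2's entire value is concentrated inside this central region, she will be unable to access it whenever agent 1 secures value at least $\alpha$.

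Concretely, I would take the cake to be a single edge identified with $[0,1]$, let agent 1 have the uniform valuation of density $1$, and let agent 2 have density $\frac{1}{2\alpha-1}$ on $[1-\alpha,\alpha]$ and $0$ elsewhere. Both valuations are normalized, additive, and divisible. In any connected allocation the two pieces are subintervals of $[0,1]$; if agent 1's piece $[a,b]$ satisfies $b-a\geq\alpha$, then using $a\geq 0$ we get $b\geq\alpha$, and using $b\leq 1$ we get $a\leq 1-\alpha$, whence $[a,b]\supseteq[1-\alpha,\alpha]$. Agent 2's piece, being disjoint from $[a,b]$, therefore lies in $[0,1-\alpha]\cup[\alpha,1]$ and meets $[1-\alpha,\alpha]$ at most at the two boundary points $1-\alpha$ and $\alpha$; consequently agent 2's value is $0<\beta$.

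The only point requiring care is that the hypothesis $\alpha>1/2$ is used essentially: this strict inequality is exactly what makes $[1-\alpha,\alpha]$ have positive length and so supports a valid normalized density for agent 2. The main conceptual hurdle, I expect, is resisting the temptation to search for complicated graph-theoretic constructions; once one notices that agent 2's valuation can be concentrated in the unavoidable center, no nontrivial topology is needed and the impossibility falls out of the single-edge cake.
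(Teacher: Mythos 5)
Your construction is exactly the one used in the paper: a single-edge cake $[0,1]$ with agent 1 uniform and agent 2's value concentrated uniformly on $[1-\alpha,\alpha]$, forcing agent 2 to value $0$ whenever agent 1's connected piece has value at least $\alpha$. The proposal is correct and matches the paper's proof, with your containment argument merely spelling out the step the paper states in one line.
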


\begin{proof}
Fix $\alpha>1/2$ and $\beta>0$, and assume that $G$ consists of a single edge represented by the interval $[0,1]$. Suppose that the first agent values the entire interval $[0,1]$ uniformly, while the second agent values the interval $[1-\alpha,\alpha]$ uniformly and nothing else. 
In any connected allocation that yields value at least $\alpha$ to the first agent, this agent must receive the entire interval $[1-\alpha,\alpha]$. However, that means the second agent receives value $0$ from the allocation.
\end{proof}

Combining Theorem~\ref{thm:two-agent-fixed} with Propositions~\ref{prop:two-agent-1/3} and \ref{prop:two-agent-1/2}, we find that the values of $(\alpha,\beta)$ for which a connected allocation that yields utility $\alpha$ to the first agent and $\beta$ to the second agent always exists are as shown in Figure~\ref{fig:two-agent-fixed}.

\begin{figure}[!ht]
\centering
\begin{tikzpicture}[scale=5]
\draw (0,1) -- (1,1) -- (1,0);
\draw [thick,red] (0,1) -- (0,0) -- (1,0);
\node [below left] at (0,0) {0};
\node [below] at (1,0) {1};
\node [below] at (1/2,0) {1/2};
\node [below] at (1/3,0) {1/3};
\node at (0.5,-0.2) {$\alpha$};
\node at (-0.25,0.5) {$\beta$};
\node [left] at (0,1) {1};
\node [left] at (0,1/2) {1/2};
\node [left] at (0,1/3) {1/3};
\draw [fill=red,thick,red] (0,0) rectangle (1/2,1/3);
\draw [fill=red,thick,red] (0,0) rectangle (1/3,1/2);
\end{tikzpicture}
\caption{The red region corresponds to the values of $(\alpha,\beta)$ such that in any instance with two agents, there exists a connected allocation that yields utility $\alpha$ to the first agent and $\beta$ to the second agent.}
\label{fig:two-agent-fixed}
\end{figure}
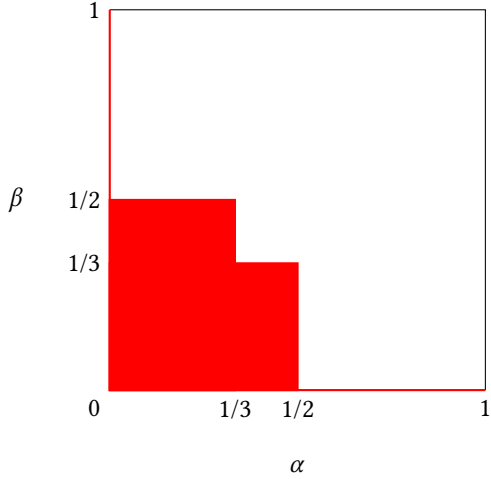

While Figure~1 completely captures the shares that can be guaranteed to the two agents, if we do not fix the entitlements of the agents in advance, it is possible to achieve better guarantees. 
As an example, consider the case where the graph $G$ consists of a single edge. In this case, Proposition~\ref{prop:two-agent-1/2} shows that any entitlements $(\alpha,\beta)$ with $\alpha>1/2$ and $\beta>0$ cannot be achieved. 
On the other hand, for any $\alpha\in[0,1]$, it is possible to give one agent a value of at least $\alpha$ and the other agent a value of at least $1-\alpha$ (while not fixing which agent receives which share). 
Indeed, if we run a moving knife on the single edge and stop when the part already covered by the knife yields value $\alpha$ to some agent, the two pieces can be allocated to yield the desired guarantee. 
In what follows, we determine all shares $(\alpha,\beta)$ for which there always exists a connected allocation that yields value $\alpha$ to one agent and $\beta$ to the other agent regardless of the graph.
Note that Theorem~\ref{thm:two-agent-fixed} carries over, and so does Proposition~\ref{prop:two-agent-1/3} since it holds for agents with identical valuations. 
We fill in the utility frontier, starting with the negative results.

\begin{proposition}
\label{prop:two-agent-1/2-1/4}
Let $\alpha>1/2$ and $\beta>1/4$. There exists an instance with $n=2$ and identical valuations such that no connected allocation yields value at least $\alpha$ to one agent and at least $\beta$ to the other agent.
\end{proposition}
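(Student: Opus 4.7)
The plan is to exhibit a single hard instance that works simultaneously for all admissible pairs $(\alpha,\beta)$. Let $G$ be the star with four edges, and suppose both agents share the identical valuation that places mass $1/4$ uniformly on each edge. My aim is to show that no connected allocation on this instance can give value at least $\alpha$ to one agent and at least $\beta$ to the other.

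The key structural observation I would establish first is that every connected piece in a star falls into one of two types: (i) a single sub-interval lying inside one edge, or (ii) the center vertex together with initial sub-segments (one per edge used) that each start at the center. Type (ii) is forced because the only way to travel between two edges within $G$ passes through the center, so any point of the piece lying in an edge must be path-connected inside the piece to the center, which means the sub-segment there must include the center-adjacent portion of that edge.

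Armed with this structure, I would argue that the agent allotted value at least $\alpha > 1/2$ must hold a piece of type (ii) whose sub-segments span at least three of the four edges: each edge has value $1/4$, so any single edge or any two edges give at most $1/2$, which is not strictly larger than $1/2$. As a result, the center-adjacent portion of three out of four edges is blocked for the other agent. That agent's connected piece then reduces to one of the following: a sub-interval within the outer (unclaimed) portion of a blocked edge, a sub-interval of the single fully-unclaimed edge, or the center vertex together with a sub-segment of that unclaimed edge (the center being a single point may be shared between the two pieces). In each case the value is at most $1/4 < \beta$, giving the required contradiction, and the symmetric case (with the roles of the two agents swapped) is identical because the valuations coincide.

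The main subtlety I expect to handle with care is the shared-center issue in the second agent's case analysis. Both agents' pieces are permitted to contain the center as a single shared point, so the second agent is not barred from having a type (ii) piece. However, any such piece can extend into an edge only if that edge's center-adjacent segment is free, and at most one edge remains free once the first agent has occupied three. This caps the second agent's value at $1/4$ in every configuration, which is exactly why four edges are needed (three would allow the second agent to reach up to $1/3$) and why the construction is aligned with the threshold $\beta > 1/4$.
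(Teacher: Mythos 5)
Your proposal is correct and uses essentially the same construction and argument as the paper: the four-edge star with uniform value $1/4$ per edge, the observation that a connected piece spanning several edges must contain the center together with center-adjacent segments of those edges, and the conclusion that the $\alpha$-agent occupies three edges, leaving the other agent confined to at most one edge's worth of value. The only cosmetic difference is that the paper finishes by counting that the two agents would jointly need five edges, whereas you bound the second agent's value directly by $1/4$; these are interchangeable.
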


\begin{proof}
Let $G$ be a star with four edges such that every agent values each edge exactly $1/4$, and the value is distributed uniformly within the edge. 
Assume for contradiction that there is a connected allocation that yields value at least $\alpha$ to one agent and at least $\beta$ to the other agent; since the agents have identical valuations, we may assume without loss of generality that these are agents 1 and 2 respectively. 
Agent 1 must receive intervals from at least three edges, and these intervals must be connected via the center vertex. Note that the unallocated parts of these edges cannot be allocated to agent 2, since agent 2 would receive value less than $1/4$. 
Similarly, agent 2 must receive intervals from at least two edges, and these intervals, which again must be connected via the center vertex, cannot be allocated to agent 1. 
However, this means that there must be at least five edges in total, a contradiction.
\end{proof}

\begin{figure}[!ht]
\centering
\begin{tikzpicture}
\draw (0,1)--(2,2)--(5,2)--(7,3);
\draw (0,3)--(2,2);
\draw (5,2)--(7,1);
\draw [fill] (0,1) circle [radius = 0.1];
\draw [fill] (2,2) circle [radius = 0.1];
\draw [fill] (5,2) circle [radius = 0.1];
\draw [fill] (7,3) circle [radius = 0.1];
\draw [fill] (0,3) circle [radius = 0.1];
\draw [fill] (7,1) circle [radius = 0.1];
\node at (3.5,2.3) {$1-4\alpha+4\epsilon$};
\node[rotate=30] at (6,2.8) {$\alpha-\epsilon$};
\node[rotate=30] at (1,1.25) {$\alpha-\epsilon$};
\node[rotate=-30] at (6,1.25) {$\alpha-\epsilon$};
\node[rotate=-30] at (1,2.8) {$\alpha-\epsilon$};
\end{tikzpicture}
\caption{Graph $G$ in the proof of Proposition~\ref{prop:two-agent-alpha-negative}.}
\label{fig:two-agent-alpha-negative}
\end{figure}
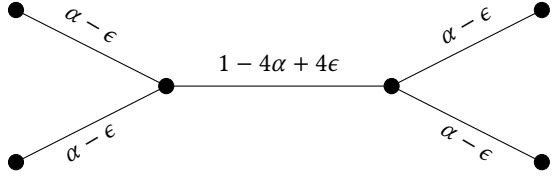

\begin{proposition}
\label{prop:two-agent-alpha-negative}
Let $\alpha\leq 1/4$ and $\beta>1-2\alpha$. 
There exists an instance with $n=2$ and identical valuations such that no connected allocation yields value at least $\alpha$ to one agent and at least $\beta$ to the other agent.
\end{proposition}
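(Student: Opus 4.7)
The plan is to use the dumbbell graph from Figure~\ref{fig:two-agent-alpha-negative}: two hubs $u,v$ joined by a \emph{center} edge of value $1-4\alpha+4\epsilon$, together with four \emph{arm} edges of value $\alpha-\epsilon$ each (two attached to $u$ and two to $v$), with each agent's valuation distributed uniformly along every edge. The hypothesis $\alpha\le 1/4$ ensures the center's value is nonnegative, and I will fix $\epsilon>0$ at the end, small enough that $2\epsilon<\beta-(1-2\alpha)$; this is possible because $\beta>1-2\alpha$.

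Suppose for contradiction that some connected allocation $(A_1,A_2)$ yields $v(A_1)\ge\alpha$ and $v(A_2)\ge\beta$ (this is without loss of generality, since the agents share a common valuation). The single claim I need is the upper bound
\[
v(A_2)\ \le\ 1-2\alpha+2\epsilon,
\]
since together with the choice of $\epsilon$, it yields $v(A_2)<\beta$, contradicting $v(A_2)\ge\beta$.

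To prove this bound I would split into two cases depending on whether $A_2$ contains the interior of the entire center edge. If $A_2$ misses some interior point of the center, then removing that point disconnects $G$ into a ``left lobe'' (containing $u$, its two arms, and a partial center segment) and an analogous ``right lobe''; the connected piece $A_2$ sits inside one lobe, so its total value is at most $(1-4\alpha+4\epsilon)+2(\alpha-\epsilon)=1-2\alpha+2\epsilon$. If instead $A_2$ contains the full center, then $A_1$ is confined to the arm edges; since any single arm has value $\alpha-\epsilon<\alpha$, $A_1$ must span two arms, which forces it to contain a hub, say $u$, together with initial segments $[u,p_i]\subset a_i$ for $i=1,2$.

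The main obstacle will be arguing in this second case that $A_2$ cannot gain any value from the left-side arms while remaining connected. The idea is that under the convention from Section~\ref{sec:prelim} (two intervals are disjoint if they meet in at most one point), any portion of $a_i$ assigned to $A_2$ must lie inside $[p_i,\ell_i]$, where $\ell_i$ is the leaf; but this portion is separated from the center edge by $A_1$'s segment $[u,p_i]$ and hence cannot be connected to $A_2$'s piece on the center. Thus $A_2$ picks up arm-value only from the two right-side arms, and the same bound $(1-4\alpha+4\epsilon)+2(\alpha-\epsilon)=1-2\alpha+2\epsilon$ applies, completing the claim.
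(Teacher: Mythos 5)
Your proof is correct and takes essentially the same approach as the paper: the same graph and edge values, the same choice of $\epsilon$ with $2\epsilon<\beta-(1-2\alpha)$, and the same bottleneck reasoning about the middle edge and the two hubs. The only cosmetic difference is that you reach the contradiction by bounding the $\beta$-agent's value above by $1-2\alpha+2\epsilon$ (via the dichotomy on whether her piece contains the full center edge), whereas the paper first forces that agent to take the entire middle edge plus parts of both sides and then contradicts the other agent's guarantee of $\alpha$.
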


\begin{proof}
Choose $\epsilon>0$ such that $\beta>1-2\alpha+2\epsilon$. 
Let $G$ be the graph shown in Figure~\ref{fig:two-agent-alpha-negative}, where the value of each agent for each edge is as shown in the figure and distributed uniformly across the edge. 
Assume for contradiction that there is a connected allocation that yields value at least $\alpha$ to one agent and at least $\beta$ to the other agent; since the agents have identical valuations, we may assume without loss of generality that these are agents 1 and 2 respectively. 
Agent 2 must receive part of the middle edge; otherwise she has value at most $2(\alpha-\epsilon)<1/2<\beta$. 
Moreover, she must receive part of some left edge and part of some right edge; otherwise she has value at most $2(\alpha-\epsilon)+(1-4\alpha+4\epsilon) = 1-2\alpha+2\epsilon<\beta$. Hence, the agent must receive the entire middle edge. 
This means that agent 1 can receive intervals from only one non-middle edge. Her value is therefore at most $\alpha-\epsilon$, a contradiction.
\end{proof}

We now move on to the positive result, which shows the additional guarantee that we can obtain if we give up control over which agent receives which entitlement.

\begin{theorem}
\label{thm:two-agent-alpha-positive}
Let $\alpha\leq 1/4$. For $n=2$ and any graph $G$, there exists a connected allocation such that one agent receives value at least $\alpha$ and the other agent receives value at least $1-2\alpha$. 
\end{theorem}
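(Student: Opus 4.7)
The plan is to observe that this statement follows almost immediately from Lemma~\ref{lem:extract-piece}, which was set up exactly for applications of this form. The hypothesis $\alpha \leq 1/4$ ensures in particular that $\alpha \leq 1$, so we can invoke the lemma with $H$ equal to the entire cake.

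Concretely, first I would note that the whole graph $G$ is itself a connected piece of cake on which both agents have value $x = 1$ by normalization. Since $\alpha \leq 1/4 \leq 1 = x$, Lemma~\ref{lem:extract-piece} (applied with $n = 2$) guarantees a partition of $G$ into two connected pieces $P_1$ and $P_2$ such that one agent---call her agent $i$---has value at least $\alpha$ on $P_1$, while the other agent $j$ has value at most $2\alpha$ on $P_1$. By additivity and normalization, agent $j$ then has value $1 - f_j(P_1) \geq 1 - 2\alpha$ on $P_2$. Setting $A_i = P_1$ and $A_j = P_2$ yields a connected allocation of the entire cake in which one agent gets at least $\alpha$ and the other gets at least $1 - 2\alpha$, as required.

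Since the lemma is already proved and states exactly what is needed, there is no serious obstacle in the argument; the only thing to verify is that the hypotheses of the lemma line up. In particular, one should check that the partition produced by the moving-knife procedure in the proof of Theorem~\ref{thm:any-agent-egal} really does cover all of $G$ (so that $P_2$ contains all the value not captured by $P_1$), which is clear from the construction: the algorithm moves a knife until an agent's value on the covered region reaches $\alpha$, and the uncovered region forms the complementary connected piece. Thus the theorem is a direct corollary of Lemma~\ref{lem:extract-piece}.
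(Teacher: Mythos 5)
Your proof is correct and is essentially identical to the paper's: the authors also derive the theorem as an immediate corollary of Lemma~\ref{lem:extract-piece} applied with $H=G$ (and $x=1$), with the complementary piece giving the other agent value at least $1-2\alpha$ by normalization. Your write-up just makes the same one-line argument explicit.
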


\begin{proof}
This follows immediately from Lemma~\ref{lem:extract-piece} by taking $H$ to be the entire graph $G$.
\end{proof}

Combining Theorem~\ref{thm:two-agent-alpha-positive} with Propositions~\ref{prop:two-agent-1/2-1/4} and \ref{prop:two-agent-alpha-negative}, we find that the values of $(\alpha,\beta)$ for which a connected allocation that yields utility $\alpha$ to one agent and $\beta$ to the other agent always exists are as shown in Figure~\ref{fig:two-agent-flexible}.

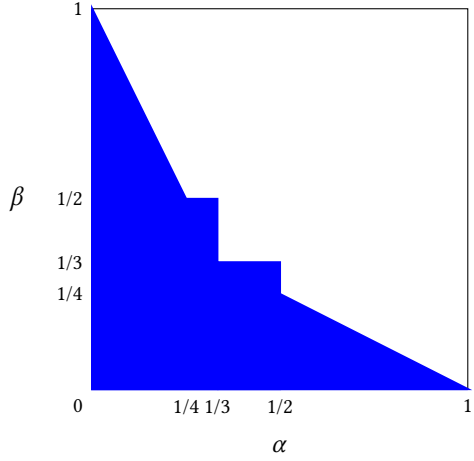
\begin{figure}[!ht]
\centering
\begin{tikzpicture}[scale=5]
\draw (0,1) -- (1,1) -- (1,0);
\draw [thick,blue] (0,1) -- (0,0) -- (1,0);
\node [below left] at (0,0) {\footnotesize 0};
\node [below] at (1,0) {\footnotesize 1};
\node [below] at (1/2,0) {\footnotesize 1/2};
\node [below] at (1/3,0) {\footnotesize 1/3};
\node [below] at (1/4,0) {\footnotesize 1/4};
\node at (0.5,-0.15) {$\alpha$};
\node at (-0.2,0.5) {$\beta$};
\node [left] at (0,1) {\footnotesize 1};
\node [left] at (0,1/2) {\footnotesize 1/2};
\node [left] at (0,1/3) {\footnotesize 1/3};
\node [left] at (0,1/4) {\footnotesize 1/4};
\draw [fill=blue,thick,blue] (0,0) rectangle (1/2,1/3);
\draw [fill=blue,thick,blue] (0,0) rectangle (1/3,1/2);
\draw [fill=blue,thick,blue] (0,1/3) -- (1/3,1/3) -- (0,1) -- (0,1/3);
\draw [fill=blue,thick,blue] (1/3,0) -- (1/3,1/3) -- (1,0) -- (1/3,0);
\end{tikzpicture}
\caption{The blue region corresponds to the values of $(\alpha,\beta)$ such that in any instance with two agents, there exists a connected allocation that yields utility $\alpha$ to one agent and $\beta$ to the other agent.}
\label{fig:two-agent-flexible}
\end{figure}

\subsection{Beyond One Connected Piece}

As we mentioned in the introduction, one important motivation for considering connected allocations is to avoid situations where an agent receives a ``union of crumbs''. 
In light of this motivation, it is interesting to explore whether we can obtain improved guarantees if we allow the agents to receive a small number of connected pieces. 
We demonstrate in this subsection that such improvements are indeed possible by presenting a tight bound of $\frac{1}{2}-\frac{1}{2\cdot 3^k}$ on the egalitarian welfare that can be guaranteed when a total of $k+1$ connected pieces are permitted.
We first establish the lower bound.

\begin{theorem}
\label{thm:two-agent-k-connected-lower}
Let $k$ be a positive integer.
For $n=2$ and any graph $G$, there exists an allocation in which the two agents receive a total of at most $k+1$ connected pieces and the egalitarian welfare is at least $\frac{1}{2}-\frac{1}{2\cdot 3^k}$.
\end{theorem}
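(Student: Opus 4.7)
I would prove the theorem by induction on $k$, writing $\alpha_k := \tfrac{1}{2} - \tfrac{1}{2 \cdot 3^k}$ for brevity. The base case $k = 1$ asserts a $2$-piece connected allocation with egalitarian welfare at least $\alpha_1 = \tfrac{1}{3}$, which is exactly the $n = 2$ case of Theorem~\ref{thm:any-agent-egal}.

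For the inductive step, the algebraic driver is the identity
\[
\alpha_k = \frac{1 + \alpha_{k-1}}{3}, \qquad \text{equivalently} \qquad 1 - 2\alpha_k = \frac{1 - 2\alpha_{k-1}}{3},
\]
so the gap $\tfrac12 - \alpha_k$ from proportionality shrinks by a factor of $3$ per step. Assuming the statement for $k - 1$, the plan is to extract a well-chosen connected piece $P \subseteq G$ (via Lemma~\ref{lem:extract-piece} with $H = G$ for some $\alpha$, or more likely via a tailored moving-knife sweep along a spanning tree of $G$ that tracks $v_1$ and $v_2$ simultaneously) whose complement $G \setminus P$ is also connected. If the extraction can be arranged so that $v_1(P) \ge \alpha_k$ and $v_2(G \setminus P) \ge \alpha_k$ hold at the same moment, then assigning $P$ to agent~$1$ and $G \setminus P$ to agent~$2$ uses only two pieces and already meets the bound. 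Otherwise one agent (say agent~$2$) is short; I would then invoke the inductive hypothesis on the sub-cake $P$ to obtain a $k$-piece allocation of $P$ with normalized egalitarian welfare $\alpha_{k-1}$, and attach $G \setminus P$ as one extra piece for agent~$2$. This produces $k + 1$ pieces in total, and the recursion above is precisely what one needs to verify that agent~$2$'s total $v_2(G \setminus P) + \alpha_{k-1}\, v_2(P)$ and agent~$1$'s total $\alpha_{k-1}\, v_1(P)$ both clear $\alpha_k$ for the parameter chosen.

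The main obstacle is calibrating the extraction so that, in the ``bad'' case where no $2$-piece cut-and-choose succeeds, the recursive combination still meets $\alpha_k$ on both sides. The factor-$2$ slackness in Lemma~\ref{lem:extract-piece} is too coarse once $\alpha_k > \tfrac13$: a single invocation cannot simultaneously force $v_1(P) \ge \alpha_k$ and $v_2(P) \le 1 - \alpha_k$. I therefore expect that the extraction step is implemented as a continuous moving-knife procedure whose stopping time is governed by an intermediate-value argument in the $(v_1(P), v_2(P))$-plane, in the spirit of the proofs of Theorems~\ref{thm:two-agent-fixed} and~\ref{thm:two-agent-alpha-positive}. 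Identifying a stopping rule under which every possible trajectory in $[0,1]^2$ is covered by either the $2$-piece case or the recursive case---so that the case split is genuinely exhaustive---is the technical heart of the inductive step.
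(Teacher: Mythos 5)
There is a genuine gap, and it is arithmetic rather than merely technical. In your recursive case you give agent~1 only the pieces coming from the inductive allocation of the sub-cake $P$, so her guaranteed total is $\alpha_{k-1}\, v_1(P)$. Since $v_1(P) \le 1$ and $\alpha_{k-1} < \alpha_k$ (the sequence $\alpha_k = \tfrac12 - \tfrac{1}{2\cdot 3^k}$ is strictly increasing), this can never reach $\alpha_k$: you would need $v_1(P) \ge \alpha_k/\alpha_{k-1} > 1$. So the case split you propose is not just hard to calibrate---the recursive branch is impossible to close for the agent confined to $P$, no matter how the extraction is tuned. You correctly sense that the factor-$2$ slack in Lemma~\ref{lem:extract-piece} is too coarse to control both agents' values of $P$ simultaneously, but the proposal leaves the resolution (your ``stopping rule'') unspecified, and the structure you commit to rules out any resolution.

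The paper's proof sidesteps the two-agent bookkeeping entirely: it only ever reasons about agent~1's valuation. By induction on $k$ it maintains a partition of $G$ into two parts, with at most $k$ connected pieces in total, such that \emph{both} parts are worth at least $\tfrac12 - \tfrac{1}{2\cdot 3^{k-1}}$ to agent~1; agent~2 is handled at the very end by letting her choose her preferred part (getting $\ge 1/2$). The inductive step is a rebalancing, not a recursion into a sub-cake: if the parts are worth $\tfrac12 - x$ and $\tfrac12 + x$ to agent~1, the heavier part (having at most $k-1$ connected pieces and value $\ge 1/2$) contains a connected piece $H$ worth at least $\tfrac{1}{2k-2} \ge \tfrac{2x}{3}$; applying Lemma~\ref{lem:extract-piece} to $H$ with a \emph{duplicate} of agent~1 as the second agent extracts a connected subpiece worth between $\tfrac{2x}{3}$ and $\tfrac{4x}{3}$, and moving it across reduces the imbalance from $x$ to at most $x/3$ while adding one piece. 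The duplication trick is exactly what converts the lemma's factor-$2$ slack into a usable two-sided window. If you want to salvage an argument along these lines, the essential change is to run the whole induction in a single agent's measure and defer the second agent to a final cut-and-choose.
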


\begin{proof}
Let $G$ be an arbitrary graph. 
It suffices to show that there exists a partition of $G$ into two parts with at most $k+1$ connected pieces in total such that both parts yield value at least $\frac{1}{2}-\frac{1}{2\cdot 3^k}$ to the first agent.
Indeed, given such a partition, we can let the second agent choose the part that she prefers and obtain value at least $1/2$. 
We therefore consider only the first agent from now on.

We proceed by induction on $k$; the base case $k=1$ follows from Theorem~\ref{thm:two-agent-fixed}.
Suppose that the statement holds for $k-1$, i.e., there exists a partition of $G$ into two parts with at most $k$ connected pieces in total such that both parts yield value at least $\frac{1}{2}-\frac{1}{2\cdot 3^{k-1}}$ to the agent.
Assume without loss of generality that the second part has value at least $1/2$, so the first part has value $\frac{1}{2}-x$ for some $0\leq x\leq\frac{1}{2\cdot 3^{k-1}}$.
Since the second part consists of at most $k-1$ connected pieces, it contains a connected piece of value at least $\frac{1}{2k-2}$.
Denote this piece by $H$.

Since $2k-2\leq 3^k$, we have $\frac{2x}{3}\leq \frac{1}{3^k}\leq \frac{1}{2k-2}$.
By creating a duplicate of our agent and applying Lemma~\ref{lem:extract-piece}, we can partition $H$ into two connected pieces in such a way that our agent has value in the range $[\frac{2x}{3},\frac{4x}{3}]$ for the first piece.
Move this piece from the second part of our partition of $G$ to the first part.
The resulting partition of $G$ consists of at most $k+1$ connected pieces in total, and the first part of this partition has value in the range $[\frac{1}{2}-\frac{x}{3},\frac{1}{2}+\frac{x}{3}]$.
This implies that both parts of the partition yield value at least $\frac{1}{2}-\frac{x}{3}\geq\frac{1}{2}-\frac{1}{2\cdot 3^k}$, completing the induction.
\end{proof}

Next, we show that the bound established in Theorem~\ref{thm:two-agent-k-connected-lower} is tight for every $k$.
First we need the following technical lemma.

\begin{lemma}
\label{lem:powers-of-three}
Let $t$ be a positive integer, and let $a_1,a_2,\dots,a_t$ be (not necessarily distinct) integers and $\varepsilon_1,\varepsilon_2,\dots,\varepsilon_t\in\{\pm 1, \pm 2\}$. 
Then
$$\left|\varepsilon_1\cdot 3^{a_1}+\varepsilon_2\cdot 3^{a_2}+\dots+\varepsilon_t\cdot 3^{a_t}-\frac{1}{2}\right|\geq\frac{1}{2\cdot 3^t}.$$
\end{lemma}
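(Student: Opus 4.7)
The plan is to prove the bound by induction on $t$, splitting into two cases according to whether $a := \min_i a_i \geq -t$ or $a \leq -t - 1$.

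In the first case the idea is to scale $S = \sum_i \varepsilon_i 3^{a_i}$ by $3^t$. Since every exponent $a_i + t$ is nonnegative, $3^t S = \sum_i \varepsilon_i 3^{a_i + t}$ is an integer, so $2 \cdot 3^t S$ is even while $3^t$ is odd. Their difference is therefore a nonzero odd integer, forcing $|2 \cdot 3^t S - 3^t| \geq 1$ and hence $|S - 1/2| \geq 1/(2 \cdot 3^t)$ directly.

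In the second case I would pick an index $i^*$ realizing the minimum exponent $a_{i^*} = a$ and set $S' := S - \varepsilon_{i^*} 3^{a_{i^*}}$, a sum of only $t - 1$ terms. The inductive hypothesis yields $|S' - 1/2| \geq 1/(2 \cdot 3^{t-1}) = 3/(2 \cdot 3^t)$, while the deleted term is small: $|\varepsilon_{i^*} 3^{a_{i^*}}| \leq 2 \cdot 3^{-t-1}$. The triangle inequality then gives
\[ \left| S - \tfrac{1}{2} \right| \;\geq\; \frac{3}{2 \cdot 3^t} - \frac{2}{3 \cdot 3^t} \;=\; \frac{5}{6 \cdot 3^t} \;>\; \frac{1}{2 \cdot 3^t}. \]
The base case $t = 1$ is covered by the first case when $a_1 \geq -1$, and otherwise follows from the trivial estimate $|S| \leq 2/9$, which gives $|S - 1/2| \geq 5/18 > 1/6$.

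The only delicate point, and really the main obstacle, is calibrating the split at $-t$ so that neither half of the argument has slack to spare. A larger threshold would require scaling by $3^m$ with $m > t$, yielding only the too-weak bound $|S - 1/2| \geq 1/(2 \cdot 3^m)$; a smaller threshold would let the removed term $2 \cdot 3^a$ exceed the $1/3^t$ of slack between $1/(2 \cdot 3^{t-1})$ and $1/(2 \cdot 3^t)$ coming from the inductive hypothesis. At the chosen cutoff $a \leq -t - 1$ we have $|\varepsilon_{i^*} 3^{a_{i^*}}| \leq 2/3^{t+1} < 1/3^t$, which is precisely what the triangle inequality demands.
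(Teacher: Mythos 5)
Your proof is correct, and it takes a genuinely different route from the paper's. The paper proceeds by a lengthy induction in which terms sharing the same power of $3$ are repeatedly merged or cancelled until at most one term per power remains, followed by a case analysis on the leading exponent $a_1$ (with sub-cases $a_1\leq -2$, $a_1=-1$, $a_1=0$, $a_1\geq 1$, each split further by $\varepsilon_1\in\{1,2\}$), resolved either by direct estimates or by a rescaling that feeds back into the induction hypothesis. Your argument replaces almost all of this with a single integrality observation: when every exponent is at least $-t$, the quantity $2\cdot 3^t S-3^t$ is an odd integer, hence nonzero, hence of absolute value at least $1$, which gives the bound immediately and with no induction at all. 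Induction is then needed only to dispose of a term with exponent below $-t$, which is small enough (at most $2/3^{t+1}$) to be absorbed by the slack between the inductive bound $1/(2\cdot 3^{t-1})$ and the target $1/(2\cdot 3^t)$; your calibration of the cutoff between $-t$ and $-t-1$ is exactly tight in the way you describe, and the base case $t=1$ is handled correctly in both branches. The trade-off is one of transparency versus machinery: the paper's merging procedure is self-contained but long and delicate, whereas your parity argument is shorter, makes the provenance of the denominator $2\cdot 3^t$ conceptually clear, and would generalize with no extra work to other odd bases.
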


\begin{proof}
We proceed by strong induction on $t$. The base case $t=1$ follows from the observation that the terms of the form $\varepsilon\cdot 3^a$ closest to $1/2$ are $1/3$ and $2/3$, and $|1/3-1/2|=|2/3-1/2|=1/6$. 
Suppose that the statement holds up to $t-1$, and assume without loss of generality that $a_1\geq a_2\geq\dots\geq a_t$. 
We process the sum $\varepsilon_1\cdot 3^{a_1}+\varepsilon_2\cdot 3^{a_2}+\dots+\varepsilon_t\cdot 3^{a_t}$ from right to left. 
Consider the moment when we process terms involving $3^a$.
If there are two terms involving $3^a$ with coefficients of opposite signs, we either cancel them or combine them into one term.
So we may assume that all terms with $3^a$ have the same sign, say positive.
If there are two terms $1\cdot 3^a$, we combine them into $2\cdot 3^a$; if there is a term $1\cdot 3^a$ and $2\cdot 3^a$, we combine them into $1\cdot 3^{a+1}$; and if there are two terms $2\cdot 3^a$, we replace them with $1\cdot 3^a$ and $1\cdot 3^{a+1}$.
Our operations do not increase the number of terms, so our procedure terminates with at most one term involving each power of $3$.

If the number of terms $\varepsilon\cdot 3^a$ is now less than $t$, we may apply the induction hypothesis and obtain our desired conclusion.
Assume therefore that the number of terms is still $t$, and  $a_1>a_2>\dots>a_t$. 
We have
$$\left|\sum_{i=2}^t \varepsilon_i\cdot 3^{a_i}\right|
\leq \sum_{i=2}^t\left|\varepsilon_i\cdot 3^{a_i}\right| 
\leq 2\cdot \sum_{i=2}^t 3^{a_i} < 2\cdot\sum_{i=2}^\infty 3^{a_i} = 3^{a_2+1} \leq 3^{a_1}. $$
If $\varepsilon_1$ is negative, we have $\sum_{i=1}^t \varepsilon_i\cdot 3^{a_i} < -3^{a_1}+3^{a_1} = 0$, so the desired statement holds. Assume now that $\varepsilon_1$ is positive. 
If $a_1\leq -2$, we have $\sum_{i=1}^t \varepsilon_i\cdot 3^{a_i}\leq 2\cdot 3^{a_1}+3^{a_1} = 3^{a_1+1}\leq 1/3$, so the desired statement holds in this case.

Suppose that $a_1=-1$. If $\varepsilon_1 = 1$, we have
$$\left|\sum_{i=1}^t \varepsilon_i\cdot 3^{a_i}-\frac{1}{2}\right| = \left|\sum_{i=2}^t \varepsilon_i\cdot 3^{a_i}-\frac{1}{6}\right| = \frac{1}{3}\left|\sum_{i=2}^t \varepsilon_i\cdot 3^{a_i+1}-\frac{1}{2}\right|\geq\frac{1}{3}\cdot\frac{1}{2\cdot 3^{t-1}}=\frac{1}{2\cdot 3^t},$$
where the inequality follows from the inductive hypothesis. 
Similarly, if $\varepsilon_1 = 2$, we have
$$\left|\sum_{i=1}^t \varepsilon_i\cdot 3^{a_i}-\frac{1}{2}\right| = \left|\sum_{i=2}^t \varepsilon_i\cdot 3^{a_i}+\frac{1}{6}\right| = \frac{1}{3}\left|\sum_{i=2}^t (-\varepsilon_i)\cdot 3^{a_i+1}-\frac{1}{2}\right|\geq\frac{1}{3}\cdot\frac{1}{2\cdot 3^{t-1}}=\frac{1}{2\cdot 3^t}.$$

Suppose now that $a_1=0$.
If $\varepsilon_1 = 1$, we have
$$\left|\sum_{i=1}^t \varepsilon_i\cdot 3^{a_i}-\frac{1}{2}\right| = \left|\sum_{i=2}^t \varepsilon_i\cdot 3^{a_i}+\frac{1}{2}\right| = \left|\sum_{i=2}^t (-\varepsilon_i)\cdot 3^{a_i}-\frac{1}{2}\right|\geq\frac{1}{2\cdot 3^{t-1}}>\frac{1}{2\cdot 3^t},$$ where the first inequality follows from the inductive hypothesis.
If $\varepsilon_1 = 2$, we have
$\sum_{i=1}^t \varepsilon_i\cdot 3^{a_i} \geq 2 -\frac{2}{3}-\frac{2}{9}-\dots = 1$, so the desired statement holds.

Finally, suppose that $a_1\geq 1$.
If $\varepsilon_1 = 2$, 
we have
$\sum_{i=1}^t \varepsilon_i\cdot 3^{a_i} \geq 3^{a_i}\left(2 -\frac{2}{3}-\frac{2}{9}-\dots\right) \geq 3$, so the desired statement holds.
Assume now that $\varepsilon_1 = 1$. If $a_2 = a_1-1$ and $\varepsilon_2$ is negative, we may combine the terms $1\cdot 3^{a_1}$ and $\varepsilon_2\cdot 3^{a_2}$ into $(3+\varepsilon_2)\cdot 3^{a_2}$ and apply the induction hypothesis.
So we may assume that either $a_2 \leq a_1-2$ or $\varepsilon_2$ is positive.
In either case, we have
$\sum_{i=1}^t \varepsilon_i\cdot 3^{a_i} \geq 3^{a_1}\left(1-\frac{2}{9}-\frac{2}{27}-\dots\right) = 3^{a_1}\cdot\frac{2}{3}\geq 2$, so the desired statement again holds.
\end{proof}

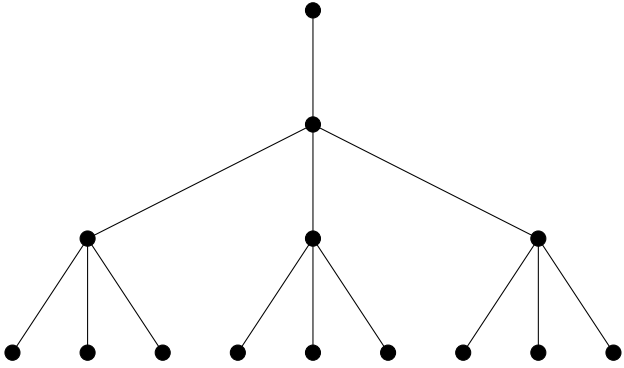
\begin{figure}[!ht]
\centering
\begin{tikzpicture}
\draw (4,4.5) -- (4,3);
\draw (4,3) -- (4,1.5);
\draw (4,3) -- (1,1.5);
\draw (4,3) -- (7,1.5);
\draw (1,1.5) -- (1,0);
\draw (1,1.5) -- (0,0);
\draw (1,1.5) -- (2,0);
\draw (4,1.5) -- (4,0);
\draw (4,1.5) -- (3,0);
\draw (4,1.5) -- (5,0);
\draw (7,1.5) -- (7,0);
\draw (7,1.5) -- (6,0);
\draw (7,1.5) -- (8,0);
\draw [fill] (4,4.5) circle [radius = 0.1];
\draw [fill] (4,3) circle [radius = 0.1];
\draw [fill] (4,1.5) circle [radius = 0.1];
\draw [fill] (1,1.5) circle [radius = 0.1];
\draw [fill] (7,1.5) circle [radius = 0.1];
\draw [fill] (0,0) circle [radius = 0.1];
\draw [fill] (1,0) circle [radius = 0.1];
\draw [fill] (2,0) circle [radius = 0.1];
\draw [fill] (3,0) circle [radius = 0.1];
\draw [fill] (4,0) circle [radius = 0.1];
\draw [fill] (5,0) circle [radius = 0.1];
\draw [fill] (6,0) circle [radius = 0.1];
\draw [fill] (7,0) circle [radius = 0.1];
\draw [fill] (8,0) circle [radius = 0.1];
\end{tikzpicture}
\caption{The graph $G$ in the proof of Theorem~\ref{thm:two-agent-k-connected-upper} for $k=2$.}
\label{fig:ternary-tree}
\end{figure}

\begin{theorem}
\label{thm:two-agent-k-connected-upper}
Let $k$ be a positive integer.
There exists an instance with $n=2$ and identical valuations such that any allocation in which the two agents receive a total of at most $k+1$ connected pieces yields egalitarian welfare at most $\frac{1}{2}-\frac{1}{2\cdot 3^k}$.
\end{theorem}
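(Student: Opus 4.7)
The plan is to construct an instance where both agents share a carefully designed valuation over a tree generalizing the one in Figure~\ref{fig:ternary-tree}, and then to invoke Lemma~\ref{lem:powers-of-three} to argue that in any $(k+1)$-piece connected partition at least one agent's total value must differ from $1/2$ by at least $\frac{1}{2\cdot 3^k}$. Specifically, I will take $G$ to be a rooted tree obtained by attaching a pendant edge to the root of a balanced ternary tree of depth $k$, and assign edge values drawn from the set $\{\varepsilon \cdot 3^{-a} : \varepsilon \in \{1,2\},\ a \in \mathbb{Z}\}$ so that they sum to $1$. The target structural property is that the value of every connected piece that can arise in an admissible partition is expressible as a signed sum $\sum_i \varepsilon_i \cdot 3^{-a_i}$ with $\varepsilon_i \in \{\pm 1, \pm 2\}$, where the number of terms is controlled by the piece's \emph{structural complexity}---roughly, the number of cuts on its boundary.

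Next, consider an arbitrary partition of $G$ into $p \leq k+1$ connected pieces; since $G$ is a tree, such a partition uses at most $k$ cuts, each either at a vertex or inside an edge. By a pigeonhole-type argument, one of the two agents receives pieces whose combined value admits a representation $\sum_i \varepsilon_i \cdot 3^{-a_i}$ with at most $k$ terms, essentially because the total boundary complexity across all pieces is on the order of $k$ and hence one agent's share must have complexity at most $k$. Lemma~\ref{lem:powers-of-three} then implies that this agent's value lies at distance at least $\frac{1}{2\cdot 3^k}$ from $\frac{1}{2}$, and since the two agents' values sum to $1$, the egalitarian welfare is at most $\frac{1}{2}-\frac{1}{2\cdot 3^k}$, as required.

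The hard part will be the first step: choosing the edge weights and verifying inductively that every admissible piece value admits a signed-power-of-three representation with the desired number of terms. The restriction $\varepsilon \in \{\pm 1, \pm 2\}$ is essential and nontrivial, because one cannot simply let integer coefficients accumulate; instead, the recursion must exploit the carry identity $3 \cdot 3^{-(d+1)} = 3^{-d}$ to consolidate the contributions of three child subtrees at depth $d+1$ into a single small-coefficient term at depth $d$. A cut inside an edge of value $\delta \cdot 3^{-b}$ can add at most one extra term to a piece's value, which is how the boundary-complexity bookkeeping ultimately matches the exponent $k$ on the right-hand side of the lemma. Once this structural property is in hand, the counting of terms and the invocation of Lemma~\ref{lem:powers-of-three} are essentially mechanical.
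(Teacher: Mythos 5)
Your overall strategy coincides with the paper's: a near-ternary rooted tree of depth $k$ (pendant edge at the root, then three-way branching), together with Lemma~\ref{lem:powers-of-three} applied to a signed-power-of-three representation of one agent's total value. The paper's concrete choice is to place all value on the $3^k$ leaf edges, each worth $3^{-k}$ uniformly, and to associate to each connected piece $Z$ the ``downward closure'' $Z^*$ of its highest point; a telescoping identity along root-paths then gives $\sum_i u(X_i) = 1 - \bigl(\sum_j u(Y_j^*) - \sum_{i: X_i \not\ni \text{root}} u(X_i^*)\bigr)$, a constant minus a signed sum of at most $k$ terms. Your boundary-complexity bookkeeping, if carried out, amounts to the same count (each of the at most $k$ cuts contributes exactly one term to each agent's expression), so the skeleton is sound.

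However, there is a genuine gap in the step you defer as ``the hard part,'' and it is not merely a matter of elaboration. Your target structural property---that every admissible piece value is a signed sum of terms $\varepsilon_i \cdot 3^{-a_i}$ with $\varepsilon_i \in \{\pm 1, \pm 2\}$---is false as stated: a cut in the interior of a valued edge produces a contribution $\delta \cdot 3^{-k}$ with $\delta \in [0,1]$ arbitrary, which is not of the form $\varepsilon \cdot 3^a$ and hence cannot be fed into Lemma~\ref{lem:powers-of-three} as ``one extra term,'' as your sketch proposes. The paper resolves this by writing the signed sum as $S + \Delta$, where $S$ consists of $r \leq k$ terms of the admissible form and $\Delta$ collects the interior-cut contributions with $|\Delta| \leq (k-r)/3^k$, and then using the quantitative slack in the lemma: $\left|S - \tfrac{1}{2}\right| - |\Delta| \geq \tfrac{1}{2 \cdot 3^r} - \tfrac{k-r}{3^k} \geq \tfrac{1}{2 \cdot 3^k}$, the last step relying on $3^b \geq 2b + 1$. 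Without this split and the accompanying trade-off between the number of exact terms and the size of the error, the invocation of Lemma~\ref{lem:powers-of-three} does not go through. Relatedly, your valuation is underspecified; placing value only on leaf edges is what guarantees that cuts in internal edges yield exact terms $\varepsilon \cdot 3^s / 3^k$ with $\varepsilon \in \{1,2\}$ and confines the problematic $\delta$ contributions to leaf-edge cuts.
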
 

\begin{proof}
Let $G$ be a rooted tree with $k+2$ layers, where the first layer consists only of the root of the tree. 
The root has one child, and every vertex in subsequent layers up to the $(k+1)$st layer has three children. In particular, the $(k+2)$nd layer consists of $3^k$ leaves.
The tree $G$ for the case $k=2$ is shown in Figure~\ref{fig:ternary-tree}.
Suppose that both agents value each edge adjacent to a leaf exactly $1/3^k$ with the value distributed uniformly within the edge, and do not value any other edge.

Consider an arbitrary allocation in which the two agents receive a total of at most $k+1$ connected pieces.
We will show that the egalitarian welfare is at most $\frac{1}{2}-\frac{1}{2\cdot 3^k}$.
If there are unallocated parts of the cake, we arbitrarily allocate these parts so that the number of connected pieces that each agent receives does not increase; this does not lower the egalitarian welfare of the allocation.
Hence we may assume that the entire cake is allocated (i.e., the allocation is complete). Denote by $X_1,\dots,X_p$ the connected pieces that agent~1 receives, and $Y_1,\dots,Y_q$ the connected pieces that agent~2 receives, where $p+q\leq k+1$. 
We assume that each edge has length $1$, and refer to the distance along the (unique) path between two points in $G$ simply as the distance between these two points. Note that every connected piece has a unique point closest to the root: if there are two such points, they must be connected via a point that is strictly closer to the root than both of them. 
For every connected piece $Z$, denote by $w_z$ the unique point in $Z$ closest to the root and $u(Z)$ the utility of the piece $Z$. We will define a connected piece $Z^*$ as follows:
\begin{itemize}
    \item If $w_z$ is not a vertex of $G$, let $Z^*$ be the set of points $w$ such that the path from $w$ to the root goes through $w_z$. In other words, $Z^*$ is the part of the tree ``below'' $w_z$.
    \item If $w_z$ is a vertex of $G$, let $Z^*$ be the set of points $w$ such that the intersection of $Z$ and the path from $w$ to the root has nonzero measure. Equivalently, $Z^*$ consists of the edges adjacent to $w_z$ that have a nontrivial overlap with $Z$, along with everything ``below'' these edges.
\end{itemize}

Assume without loss of generality that agent~1 receives a piece containing the root of the tree. We claim that $u(X_1)+\dots+u(X_p) = [u(X_1^*)+\dots+u(X_p^*)]-[u(Y_1^*)+\dots+u(Y_q^*)]$. 
First, note that for each $Z^*$ where $Z\in\{X_1,\dots,X_p,Y_1,\dots,Y_q\}$, every connected piece $X_i$ and $Y_i$ is either contained in $Z^*$ in its entirety or not at all. Hence $u(Z^*)$ can be written as a sum of distinct $u(X_i)$'s and $u(Y_i)$'s. 
Moreover, one can verify from the definition that a connected piece $Z$ is contained in $W^*$ if and only if $Z=W$ or the path from $w_z$ to the root has a nontrivial overlap with $W$. 
This path alternates between pieces $X_i$ and $Y_i$ and ends with a piece $X_i$. Therefore, each $u(X_i)$ is contained in $u(X_1^*)+\dots+u(X_p^*)$ exactly once more than in $u(Y_1^*)+\dots+u(Y_q^*)$, while each $u(Y_i)$ is contained in the two sums an equal number of times. This yields the claimed equality.

Next, observe that each $u(Z^*)$ can be written as $3^s/3^k$ for some nonnegative integer $s\leq k$, or $2\cdot 3^s/3^k$ for some nonnegative integer $s\leq k-1$, or $\delta/3^k$ for some $\delta\in[0,1]$.
Note also that since agent $1$ receives a piece containing the root of the tree, for this piece $X_i$ we have $u(X_i^*) = 1$.
It follows that $[u(X_1^*)+\dots+u(X_p^*)]-[u(Y_1^*)+\dots+u(Y_q^*)]$ can be written as $1-(S+\Delta)$, where $S$ is a sum of a number of terms (say, $r$ terms, where $r\leq k$) of the form $\varepsilon\cdot 3^a$ with $\varepsilon\in\{\pm 1, \pm 2\}$ and $a$ an integer, and $|\Delta|\leq (k-r)/3^k$.
Hence, letting $d:=\left|u(X_1)+\dots+u(X_p)-\frac{1}{2}\right|$, we have
\begin{align*}
    d &= \left|\left([u(X_1^*)+\dots+u(X_p^*)]-[u(Y_1^*)+\dots+u(Y_q^*)]\right)-\frac{1}{2}\right| \\
    &= \left|(S+\Delta)-\frac{1}{2}\right| \geq \left|S-\frac{1}{2}\right|-|\Delta|  \\
    &\geq \frac{1}{2\cdot 3^r} - \frac{k-r}{3^k} = \frac{3^{k-r}-2(k-r)}{2\cdot 3^k} \geq \frac{1}{2\cdot 3^k},
\end{align*}
where the first inequality follows from the triangle inequality, the second inequality follows from Lemma~\ref{lem:powers-of-three}, and the last inequality holds since $3^b\geq 2b+1$ for any integer $b\geq 0$.
This implies that the egalitarian welfare is at most $\frac{1}{2}-\frac{1}{2\cdot 3^k}$, as claimed.
\end{proof}

Recall that the \emph{height} of a rooted tree is the length of the longest path from the root to a leaf vertex. For example, a star rooted at the center vertex has height $1$. Our next theorem shows that for graphs that can be represented as a tree of height at most $2$, we can obtain full proportionality provided that we allow two connected pieces per agent.

\begin{figure}[!ht]
\centering
\begin{tikzpicture}
\draw (1,1.5) -- (4,3) -- (7,1.5);
\draw (4,3) -- (3,1.5);
\draw (4,3) -- (5,1.5) -- (4.5,0);
\draw (0.5,0) -- (1,1.5) -- (1.5,0);
\draw (5,1.5) -- (5.5,0);
\draw (5,1.5) -- (5,0);
\draw (7,1.5) -- (7,0);
\draw [fill] (1,1.5) circle [radius = 0.1];
\draw [fill] (3,1.5) circle [radius = 0.1];
\draw [fill] (5,1.5) circle [radius = 0.1];
\draw [fill] (7,1.5) circle [radius = 0.1];
\draw [fill] (0.5,0) circle [radius = 0.1];
\draw [fill] (1.5,0) circle [radius = 0.1];
\draw [fill] (5,0) circle [radius = 0.1];
\draw [fill] (5.5,0) circle [radius = 0.1];
\draw [fill] (4.5,0) circle [radius = 0.1];
\draw [fill] (7,0) circle [radius = 0.1];
\draw [fill] (4,3) circle [radius = 0.1];
\node at (4.4,3) {$u$};
\node at (0.65,1.5) {$v_1$};
\node at (2.65,1.5) {$v_2$};
\node at (5.35,1.5) {$v_3$};
\node at (7.35,1.5) {$v_4$};
\end{tikzpicture}
\caption{Example of a graph $G$ in the proof of Theorem~\ref{thm:two-agent-two-connected-1/2}.}
\label{fig:tree-height-2}
\end{figure}
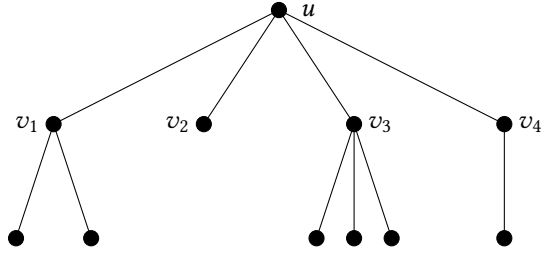

\begin{theorem}
\label{thm:two-agent-two-connected-1/2}
For $n=2$ and any graph $G$ that can be represented as a rooted tree of height at most $2$, there exists a proportional allocation such that each agent receives at most two connected pieces. 
\end{theorem}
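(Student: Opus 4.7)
The plan is a cut-and-choose argument: the first agent partitions the tree into two parts $A$ and $B$, each of value exactly $1/2$ to her and each a union of at most two connected pieces; the second agent then takes her preferred part, receiving value at least $1/2$ while also receiving at most two connected pieces.

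To construct the first agent's partition, let $u$ be the root and $v_1,\dots,v_d$ its children in arbitrary order. Write $T_i$ for the subtree rooted at $v_i$ together with the edge $u v_i$, set $c_i := f_1(T_i)$, and let $e_i$ denote the first agent's value of the edge $u v_i$. Let $k$ be the smallest index with $c_1+\dots+c_k \geq 1/2$, and set $R := 1/2 - (c_1+\dots+c_{k-1})$, so that $0 < R \leq c_k$. Place $T_1,\dots,T_{k-1}$ entirely in $A$, and split $T_k$ as follows. If $R \leq e_k$, slide a knife along the edge $u v_k$ from $u$ toward $v_k$ and stop at the point $x$ where $f_1([u,x]) = R$; put $[u,x]$ in $A$ and the rest of $T_k$ in $B$. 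If $R > e_k$, place the entire edge $u v_k$ in $A$, then include the leaf-edges of $v_k$ one by one (in any order) fully until the next would overshoot $R$, and finally place in $A$ a portion of the last leaf-edge of the required value, choosing the portion \emph{adjacent to the leaf endpoint} rather than adjacent to $v_k$. Part $B$ is the complement.

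It remains to verify the piece counts. When $R \leq e_k$, part $A$ is a single connected piece (all of $T_1,\dots,T_{k-1}$ and $[u,x]$ meet at $u$), while part $B$ consists of one piece at $v_k$ (made of $[x,v_k]$ together with all leaf-edges of $v_k$) and, if $k<d$, one piece at $u$ formed by $T_{k+1},\dots,T_d$. When $R > e_k$, part $A$ consists of one piece based at $u$ and $v_k$ (gathering $T_1,\dots,T_{k-1}$, the edge $u v_k$, and the included full leaf-edges of $v_k$) together with, if a truly partial leaf-edge is used, an isolated piece adjacent to the chosen leaf; symmetrically, part $B$ consists of one piece at $v_k$ (the remaining full leaf-edges together with the portion of the partial leaf-edge adjacent to $v_k$) and, if $k<d$, one piece at $u$ formed by $T_{k+1},\dots,T_d$. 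In every case each part is a union of at most two connected pieces.

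The substantive obstacle is the design choice in the $R > e_k$ case: the partial leaf-edge must be placed next to the leaf rather than next to $v_k$. The opposite choice would strand the remaining full leaf-edges at $v_k$ on the $B$-side together with an isolated partial-edge fragment near the leaf, forcing three pieces in $B$. Beyond this, the remaining verifications---existence of the required cut points, that each intermediate value is actually achievable, and the degenerate situations $k=1$, $k=d$, and $v_k$ being a leaf of $u$---are routine.
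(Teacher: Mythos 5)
Your proof is correct and follows essentially the same route as the paper's: both partition the height-$2$ tree branch by branch at the root, cut inside one branch, and account for the resulting pieces as one cluster at $u$, one at $v_k$, and possibly one stray fragment near a leaf, yielding at most two connected pieces per side. The only cosmetic differences are that you frame it as cut-and-choose (agent~1 bisects exactly, agent~2 chooses) rather than a symmetric moving knife that stops when either agent reaches $1/2$, and you sweep the split branch from the root edge outward (forcing your ``partial leaf-edge placed at the leaf end'' trick) whereas the paper sweeps the leaf edges first and the root edge last.
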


\begin{proof}
Let $G$ be a rooted tree with root $u$, and let $v_1,\dots,v_k$ be the children of $u$ (see Figure~\ref{fig:tree-height-2}).
We move the knife in the following order: For $i=1,2,\dots,k$, we move the knife from $v_i$ down to its first child, from $v_i$ down to its second child, and so on until we reach its last child, then from $v_i$ up to $u$. 
This process ensures that the knife goes through all edges of $G$. We stop when the part already covered by the knife is worth $1/2$ to one of the agents. 
We allocate the covered part of the cake to that agent, and the remaining part to the other agent.

Clearly, the resulting allocation is proportional; it remains to show that each agent receives at most two connected pieces. We consider two cases:
\begin{itemize}
    \item Case 1: The knife stops on its way from a vertex $v_i$ to its child $w$ (possibly at $w$). This means that the first $i-1$ branches of the tree have been covered, and they are connected through $u$. 
    Moreover, the covered part in the $i$th branch are connected through $v_i$. Hence the covered part forms two connected pieces of the cake. 
    The uncovered part in the $i$th branch besides the edge $v_iw$ is connected through $v_i$, and it is connected to the remaining uncovered part from the $(i+1)$st branch onwards through $u$. 
    The uncovered part of the edge $v_iw$ forms one connected piece. It follows that both agents receive at most two connected pieces.
    \item Case 2: The knife stops on its way from a vertex $v_i$ to the root $u$ (possibly at $u$). The same argument as in Case 1 shows that the covered part forms two connected pieces of the cake. 
    The uncovered part in the $i$th branch is contained in the edge $v_iu$, and it is connected to the remaining uncovered part from the $(i+1)$st branch onwards through $u$. It follows that both agents receive at most two connected pieces.
\end{itemize}
The two cases together complete the proof.
\end{proof}

\subsection{Equitability}

We end this section by briefly considering another well-established fairness notion: equitability. Interestingly, while for approximate proportionality it is useful to consider the \emph{maximum} among the agents' values for the current piece (Theorem~\ref{thm:any-agent-egal}), for approximate equitability in the case of two agents, the appropriate quantity to consider is the \emph{sum} of these values. Note that an empty allocation is always equitable but yields the lowest possible welfare of zero, so we are interested in complete allocations.

\begin{theorem}
For $n=2$ and any graph $G$, there exists a complete and connected allocation with inequity at most $1/3$. Moreover, the bound $1/3$ is tight.
\end{theorem}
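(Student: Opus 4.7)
The plan is to handle the upper bound and the tightness separately, following the hint in the paragraph above the theorem that for two agents the right quantity to track is the \emph{sum} $s(H) := f_1(H) + f_2(H)$ of the two agents' values on a piece.

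For the upper bound, I would first reduce to the case where $G$ is a tree, exactly as in the proof of Theorem~\ref{thm:any-agent-egal}. The key reformulation is that for any complete allocation $(A_1, A_2)$,
$$|f_1(A_1) - f_2(A_2)| = |f_1(A_1) - (1 - f_2(A_1))| = |s(A_1) - 1|,$$
so it suffices to produce a single connected piece $A_1$ whose complement $A_2 = G \setminus A_1$ is also connected and for which $s(A_1) \in [2/3, 4/3]$. To locate such an $A_1$, I would root the tree at an arbitrary vertex $u$ and descend greedily to find a vertex $v$ with $s(T_v) \geq 2/3$ but $s(T_w) < 2/3$ for every child $w$ of $v$; such a $v$ exists because $s(T_u) = 2$. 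Writing $B_i$ for the branch at $v$ consisting of the edge $vw_i$ together with $T_{w_i}$, I would then mimic the two-case structure of the proof of Theorem~\ref{thm:any-agent-egal} but with sums in place of maxima. If some branch $B_i$ satisfies $s(B_i) \geq 2/3$, a knife swept along $vw_i$ from $w_i$ toward $v$ yields, by continuity of the valuations, a cut point at which $s(A_1) = 2/3$ exactly. Otherwise every $s(B_i) < 2/3$, and taking $A_1$ to be the union of the first $t$ branches for the smallest $t$ with cumulative sum $\geq 2/3$ gives $s(A_1) < 2/3 + 2/3 = 4/3$, so $s(A_1) \in [2/3, 4/3)$. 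In both cases $A_1$ and $A_2$ share only the cut point and are therefore both connected, and the resulting inequity is at most $1/3$.

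For the tightness, the bad instance is a star with three edges in which both agents value each edge uniformly at $1/3$. My plan is to argue that any complete partition of this star into two connected pieces is, up to a measure-zero set, determined by a single cut: with two or more cuts on different edges, the two resulting leaf-side intervals cannot lie together in a single connected piece without also swallowing the middle region (and hence the whole graph), so the multi-cut configuration collapses to the single-cut case. A single cut on one edge at fractional distance $\alpha \in [0,1]$ from the leaf produces pieces of values $\alpha/3$ and $1 - \alpha/3$, so the inequity is $|1 - 2\alpha/3| \geq 1/3$; partitions at the center itself that split the three edges into a pair and a singleton likewise give inequity exactly $1/3$.

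I expect the main obstacle to be the connectivity bookkeeping in the borderline subcases of the upper bound---in particular, Case~1 when the intermediate-value cut point coincides with $v$ itself, and Case~2 when $v$ equals the root $u$ (so there is no path from $v$ upward to exploit)---but in each such situation the two pieces still share only the single point $v$, and the verification is directly analogous to the connectivity arguments already used in Theorem~\ref{thm:any-agent-egal}.
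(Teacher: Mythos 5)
Your proposal is correct and follows essentially the same route as the paper: the reformulation of inequity as $|f_1(A_1)+f_2(A_1)-1|$ for complete allocations, the tree reduction and two-case moving-knife argument of Theorem~\ref{thm:any-agent-egal} applied to the sum $f_1+f_2$ with threshold $2/3$, and the three-edge star for tightness. The only cosmetic difference is that your tightness argument tracks the single cut point explicitly, whereas the paper simply observes that one agent must receive two full edges; these are equivalent.
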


\begin{proof}
To show tightness, let $G$ be a star with three edges such that every agent values each edge exactly $1/3$, and the value is distributed uniformly within the edge. In any complete and connected allocation, one of the agents must receive two full edges (and possibly part of the third). This implies that the inequity in such an allocation is at least $1/3$.

We now prove the first part of the theorem. Let $G$ be an arbitrary graph. Our goal is to find a complete, connected allocation $(A_1,A_2)$ such that $|f_1(A_1)-f_2(A_2)|\leq 1/3$. Since the allocation is complete, we may write $f_2(A_2)=1-f_2(A_1)$. The desired condition can be rewritten as $2/3\leq f_1(A_1)+f_2(A_1)\leq 4/3$. We use a similar procedure as in Theorem~\ref{thm:any-agent-egal}, turning the graph into a tree and considering a minimal subtree that has value at least $2/3$ with respect to $f_1+f_2$. We stop either when the knife cuts a subtree $A_1$ such that $f_1(A_1)+f_2(A_1)=2/3$ (Case~1), or when a set of branches $A_1$ satisfies $f_1(A_1)+f_2(A_1)\geq 2/3$ for the first time (Case~2). An analogous argument shows that we must have $f_1(A_1)+f_2(A_1)\leq 2/3 + 2/3 = 4/3$, which yields the desired inequality.
\end{proof}

\section{Chore Division}
\label{sec:choredivision}

In this section, we assume that the graph represents a \emph{chore}, i.e., an item that yields negative value to the agents. This models, for example, a situation where we wish to divide the responsibilities of maintaining a road or cable network.

For the case of two agents, all results in cake cutting (Section~\ref{sec:twoagents}) can be translated to analogous results in chore division using a simple reduction. 
The idea is that given a chore instance, we can turn it into a cake instance by pretending that the cost functions are cake valuation functions, applying a result in the cake setting to obtain an initial allocation of the chore, and having the agents swap their assigned piece to arrive at the final allocation. 
This reduction works for translating positive results to the chore setting. For negative results, we can use the reduction in the opposite direction, starting from a chore instance and reducing it to a cake instance. 
As an illustrating example, we show how to deduce an analogue of Theorem~\ref{thm:two-agent-fixed} in the chore setting.

\begin{theorem}
\label{thm:two-agent-fixed-chore}
In chore division, for $n=2$ and any graph $G$, there exists a connected allocation such that the first agent incurs cost at most $1/2$ and the second agent incurs cost at most $2/3$.
\end{theorem}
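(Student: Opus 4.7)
The plan is to apply the reduction sketched in the paragraph preceding the theorem. Treat the cost functions $f_1,f_2$ of the chore instance as cake valuation functions on the same graph $G$, and invoke Theorem~\ref{thm:two-agent-fixed} to obtain a connected cake allocation $(B_1,B_2)$ with $f_1(B_1)\geq 1/2$ and $f_2(B_2)\geq 1/3$. A key observation I would highlight is that the construction underlying Theorem~\ref{thm:two-agent-fixed} (use Theorem~\ref{thm:any-agent-egal} to partition the cake into two connected pieces valued at least $1/3$ by agent 2, then let agent 1 choose her preferred piece) actually produces a complete allocation, so $B_1$ and $B_2$ together cover all of $G$.

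Next, I would define the chore allocation by swapping, setting $A_1:=B_2$ and $A_2:=B_1$. Connectivity of each $A_i$ is inherited from that of $B_i$, and completeness is preserved. The cost bounds then follow immediately from the normalization $f_i(G)=1$:
\[
f_1(A_1)=f_1(B_2)=1-f_1(B_1)\leq 1-\tfrac{1}{2}=\tfrac{1}{2},
\qquad
f_2(A_2)=f_2(B_1)=1-f_2(B_2)\leq 1-\tfrac{1}{3}=\tfrac{2}{3}.
\]

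There is essentially no obstacle here; the reduction is purely syntactic and converts the cake guarantees $(1/2,1/3)$ into the chore guarantees $(1-1/2,\,1-1/3)=(1/2,\,2/3)$. The only mildly delicate point worth spelling out is completeness of the base cake allocation, which is what makes the identity $f_i(B_j)=1-f_i(B_i)$ valid for $j\neq i$; this is provided by the fact that the protocol in Theorem~\ref{thm:two-agent-fixed} is a partition procedure rather than a bare existence statement. Everything else in the argument is a one-line application of normalization.
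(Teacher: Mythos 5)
Your proposal is correct and matches the paper's own proof: both treat the cost functions as cake valuations, apply Theorem~\ref{thm:two-agent-fixed}, and swap the two pieces, using completeness and normalization to convert the guarantees $(1/2,1/3)$ into $(1/2,2/3)$. Your explicit remark about why the base cake allocation is complete is a point the paper only notes parenthetically, but it is the same argument.
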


\begin{proof}
Consider an arbitrary chore division instance. If we treat the chore valuations as cake valuations, then by Theorem~\ref{thm:two-agent-fixed}, there exists a (complete) connected allocation such that the first agent receives value at least $1/2$ and the second agent receives value at least $1/3$. 
Let the agents swap their assigned pieces in this allocation. In the resulting allocation, which is also connected, the first agent incurs cost at most $1-1/2=1/2$ and the second agent incurs cost at most $1-1/3=2/3$.
\end{proof}

When there are more than two agents, the relationship between the cake and the chore setting becomes much less clear, and we do not know how to translate results from one setting to the other. In the chore setting, we show that for each $n$, the egalitarian cost may need to be as high as $\frac{2}{n+1}$.

\begin{proposition}
\label{prop:any-agent-chore-negative}
In chore division, there exists a graph $G$ and identical valuations of the agents such that any connected allocation yields egalitarian cost at least $\frac{2}{n+1}$.
\end{proposition}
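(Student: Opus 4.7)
The plan is to take $G$ to be the star with $n+1$ edges and let every agent assign cost $\frac{1}{n+1}$ uniformly to each edge, then argue that any complete connected allocation leaves some agent with cost at least $\frac{2}{n+1}$.

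The key structural observation is that on a star with center $c$, every connected piece is either (i) an interval contained in the interior of a single edge, or (ii) a ``sub-star'' of the form $\bigcup_{e \in S}[c, x_e]$ with $S$ a nonempty set of edges---two points on different edges can only be joined via $c$. Moreover, on any single edge at most one agent can hold a positive-length interval abutting $c$, since otherwise two pieces would overlap in more than finitely many points. Since completeness forces the neighbourhood of $c$ on each edge to be covered by a positive-measure interval, each of the $n+1$ edges has a unique ``central owner.''

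Let $r$ be the number of distinct central owners; their ownership sets $S_1,\ldots,S_r$ are pairwise disjoint and satisfy $|S_1|+\cdots+|S_r|=n+1$. The remaining $n-r$ agents either hold zero-measure pieces or are ``outer'' agents---each consisting of a single sub-interval in the interior of a single edge. Each outer agent lies entirely within one edge's outer portion, so the number of edges with non-empty outer portion is at most $n-r$. Hence at least $(n+1)-(n-r)=r+1$ edges are fully covered by their central owner, and distributing $r+1$ fully covered edges among $r$ central owners forces some central owner to own at least two fully covered edges, incurring cost at least $\frac{2}{n+1}$, as desired.

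The main obstacle is stating the structural claims rigorously within the paper's formalism, which permits distinct pieces to share finitely many points. In particular, I would need to argue carefully that connected pieces spanning more than one edge must pass through $c$, and that a single edge cannot have two distinct positive-length ``central owners''; once these are pinned down, the pigeonhole counting is essentially bookkeeping.
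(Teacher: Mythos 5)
Your construction is exactly the paper's (a star with $n+1$ edges, each valued $\tfrac{1}{n+1}$ uniformly), and your argument is correct, but the counting you run on it is genuinely different from---and heavier than---the paper's. You classify pieces into central owners and outer agents, argue each edge has a unique central owner, bound the number of edges with nonempty outer portion by the number of non-central agents, and pigeonhole the $r+1$ fully covered edges among the $r$ central owners. The paper instead pigeonholes on the $n+1$ \emph{outer endpoints} $v_1,\dots,v_{n+1}$: by completeness each endpoint lies in some agent's piece; a connected piece containing two outer endpoints must contain both entire edges (any path between them passes through the center) and hence costs at least $\tfrac{2}{n+1}$; otherwise each of the $n$ agents covers at most one endpoint, leaving one uncovered. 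This shorter route sidesteps precisely the structural bookkeeping you flag as the main obstacle (uniqueness of central owners, the shape of connected pieces near $c$ under the finite-intersection convention), since it only needs the single observation that a connected set containing $v_i$ and $v_j$ contains both full edges. Your version does buy a finer structural picture of connected allocations on stars, but for this proposition the endpoint count suffices and is easier to make rigorous.
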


\begin{proof}
Let $G$ be a star with $n+1$ edges such that every agent has cost exactly $\frac{1}{n+1}$ for each edge, and the cost is distributed uniformly within the edge. 
Assume for contradiction that there is a connected allocation with egalitarian cost strictly less than $\frac{2}{n+1}$. Let $v_1,\dots,v_{n+1}$ be the endpoints of the edges different from the center of the star. 
For each $i$, some agent must be allocated a piece containing $v_i$. If an agent receives a piece containing two endpoints, she incurs cost at least $\frac{2}{n+1}$, which is impossible. 
So every agent's piece contains at most one endpoint $v_i$. However, since there are $n$ agents and $n+1$ endpoints, some endpoint is left unallocated, a contradiction.
\end{proof}

The bound $\frac{2}{n+1}$ is tight for $n=2$ due to Theorem~\ref{thm:two-agent-fixed-chore}. 
Next, we show that it remains tight as long as $n\leq 5$. 
A simpler protocol for the case $n=3$ is given in Appendix~\ref{app:three-agent-chore}.

\begin{theorem}
In chore division, for $n\leq 5$ and any graph $G$, there exists a connected allocation with egalitarian cost at most $\frac{2}{n+1}$.
\end{theorem}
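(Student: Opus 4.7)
The plan is to prove the statement by induction on $n$, with the base case $n=2$ handled by Theorem~\ref{thm:two-agent-fixed-chore} (egalitarian cost at most $2/3 = 2/(n+1)$). For the inductive step, assume the result for $n-1$ agents with $n \in \{3,4,5\}$. I will construct a connected piece $P$ whose complement in $G$ is also connected, together with an agent $i$, satisfying (a) $f_i(P) \leq \frac{2}{n+1}$ and (b) $f_j(P) \geq \frac{1}{n+1}$ for every $j \neq i$. Allocating $P$ to agent $i$ and invoking the inductive hypothesis on the complement---after renormalizing each remaining agent $j$'s cost function by dividing by $1-f_j(P)$ so it sums to $1$ there---yields a connected allocation with egalitarian cost at most $\frac{2}{n}$ in the renormalized scale. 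By (b), this translates to at most $\frac{2}{n}(1-f_j(P)) \leq \frac{2}{n}\cdot\frac{n}{n+1} = \frac{2}{n+1}$ in the original units for each remaining agent, completing the induction.

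To build $P$, I first reduce $G$ to a tree using the operation in the proof of Theorem~\ref{thm:any-agent-egal}, then run a moving-knife procedure in the Case~1/Case~2 style of that theorem. The knife identifies a vertex $v$ such that the subtree rooted at $v$ is ``just large enough'' to push $n-1$ of the agents past threshold $\frac{1}{n+1}$ without pushing the remaining agent past $\frac{2}{n+1}$. In Case~1 (some single branch of $v$ by itself already carries sufficient cost), the knife sweeps along the edge of that branch toward $v$ and stops where the ``subtree-plus-interval'' piece first satisfies (a) and (b). In Case~2 (no single branch suffices), branches of $v$ are accumulated one by one into the piece until the combined piece satisfies both conditions, with the per-branch cost bound inherited from the choice of $v$ controlling the chosen agent's cost. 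In both cases the resulting piece is a subtree attached to the rest at a single point, so its complement is automatically connected, as required.

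The main obstacle will be verifying that such a vertex $v$ and stopping point always exist for every $n \in \{3,4,5\}$. This will take the form of an intermediate-value argument along the moving-knife trajectory, combined with a case analysis over where the knife currently lies and how the $n$ agents' costs are distributed across the subtrees of $v$. The subtle point is that the $n-1$ lower thresholds of $\frac{1}{n+1}$ must be reached \emph{before} the distinguished agent's cost exceeds $\frac{2}{n+1}$, and the gap $\frac{1}{n+1}$ between these bounds shrinks as $n$ grows; the argument succeeds for $n \leq 5$ but does not appear to generalize, which matches the theorem's restriction.
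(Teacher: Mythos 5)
Your high-level recursion (extract a connected piece $P$ with connected complement for a single agent $i$ satisfying $f_i(P)\leq\frac{2}{n+1}$ and $f_j(P)\geq\frac{1}{n+1}$ for all $j\neq i$, then renormalize and recurse on $n-1$ agents) is arithmetically sound, but the central existence claim is left as an acknowledged ``obstacle,'' and this is precisely where the approach breaks for $n\in\{4,5\}$. The difficulty is the discrete jump in your Case~2: to keep the complement connected, whole branches of $v$ must be added at once, and a single such jump can carry the piece from ``fails (b)'' directly to ``fails (a)'' with no usable intermediate. Concretely, for $n=4$ (thresholds $\frac{1}{5}$ and $\frac{2}{5}$), suppose the accumulated piece $T_{t-1}$ has costs $(0.1,0.1,0.31,0.31)$ to the four agents and the next branch has costs $(0.36,0.36,0.1,0.1)$; each of these pieces individually has second-smallest cost below $\frac{1}{5}$, consistent with the minimality of $v$. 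Before the jump two agents are below $\frac{1}{5}$, so condition~(b) fails for every choice of designated agent; after the jump the costs are $(0.46,0.46,0.41,0.41)$, all exceeding $\frac{2}{5}$, so condition~(a) fails, and the complements of both pieces also cost every agent more than $\frac{2}{5}$. (For $n=3$ a pigeonhole rescues you: two agents are cheap on $T_{t-1}$ and two on the new branch, so among three agents someone is cheap on both and stays below $\frac{2}{4}$. That overlap argument gives $2+2-n\geq 1$ only when $n\leq 3$.) Your sketch also never identifies where the hypothesis $n\leq 5$ actually enters, which is a warning sign: if the single-extraction recursion worked as described, nothing in it would stop at $n=5$, yet the bound is open for $n\geq 6$.

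The paper escapes this trap by abandoning single-agent extraction: it allocates the jumped-to piece $T$ to a \emph{group} of agents who recursively subdivide it among themselves, while the remaining agents recursively subdivide the complement, so the strong inductive hypothesis is invoked on both sides with varying numbers of agents. The bookkeeping is done via two conditions on the sorted costs $c_1\leq\dots\leq c_n$ of a piece (Condition~1: $c_1\leq\frac{1}{n+1}$ and $c_i\leq\frac{i-1}{n+1}$; Condition~2: $c_i>\frac{i+1}{n+1}$ for $i<n$ and $c_n>\frac{n}{n+1}$): one shows $T$ violates both, extracts an index $j$ such that $T$ costs at most $\frac{j+1}{n+1}$ to each of the $j$ cheapest agents (who then each pay at most $\frac{2}{j+1}\cdot\frac{j+1}{n+1}=\frac{2}{n+1}$ recursively) and the complement costs at most $\frac{n-j+1}{n+1}$ to each of the rest. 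The restriction $n\leq 5$ enters exactly in proving that $T$ cannot satisfy Condition~2, via a counting inequality that reduces to $(n-1)(n-6)\geq 0$ being false. In my example above, the paper's method hands $T$ (costs $0.41$ each) to the two cheapest agents, who split it so that each pays about $0.27<\frac{2}{5}$. To repair your proof you would need either this group-sharing mechanism or a genuinely different construction of $P$; the moving-knife intermediate-value argument alone cannot supply it.
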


\begin{proof}
We proceed by strong induction on $n$; the case $n=1$ is trivial while the case $n=2$ follows from Theorem~\ref{thm:two-agent-fixed-chore}. 
Let $3\leq n\leq 5$. For an arbitrary piece of chore, let $c_1\leq c_2\leq\dots\leq c_n$ be the costs of the agents for the piece in increasing order. We define two conditions that the piece may satisfy:
\begin{itemize}
    \item Condition 1: $c_1\leq \frac{1}{n+1}$, and $c_i\leq\frac{i-1}{n+1}$ for $i=2,3,\dots,n$.
    \item Condition 2: $c_i>\frac{i+1}{n+1}$ for $i=1,2,\dots,n-1$, and $c_n>\frac{n}{n+1}$.
\end{itemize}

We turn the graph $G$ into a tree as in Theorem~\ref{thm:any-agent-egal}. Choose an arbitrary vertex $u$ of the tree $G$ as its root. 
Let $v$ be a vertex such that the subtree rooted at $v$ does not satisfy Condition~1, but the subtree rooted at any child of $v$ does. Let $w_1,\dots,w_k$ be the children of $v$. We consider two cases:

\begin{itemize}
    \item Case 1: At least one of the $k$ branches of $v$ along with the corresponding subtree does not satisfy Condition~1. Assume without loss of generality that the branch containing $w_1$ is one such branch. 
    By our assumption, the subtree rooted at $w_1$ satisfies Condition~1. Hence, by moving a knife from $w_1$ to $v$, we can find the point $x$ such that if we consider the interval $[w_1,x]$ together with the subtree rooted at $w_1$ as a subtree rooted at $x$, at least one of the inequalities in Condition~1 is tight for this subtree and the remaining inequalities still hold. Assume that for each $j=1,2,\dots,n$, agent $j$ has cost $c_j$ for this subtree.
    
    Suppose that the inequality involving $c_i$ is tight. If $i=1$, we allocate this subtree to agent~1, who incurs cost $\frac{1}{n+1}$. 
    The rest of the chore has cost at most $\frac{n}{n+1}$ to the remaining agents, and by the inductive hypothesis, it can be allocated in a connected manner to these agents so that each agent incurs cost at most $\frac{2}{n}\cdot\frac{n}{n+1}=\frac{2}{n+1}$. Suppose now that $i\geq 2$. 
    This means that $c_i=\frac{i-1}{n+1}$. The subtree has cost at most $\frac{i-1}{n+1}$ to the first $i-1$ agents. By the inductive hypothesis, it can be allocated in a connected manner to these agents so that each agent incurs cost at most $\frac{2}{i}\cdot\frac{i-1}{n+1}<\frac{2}{n+1}$. 
    The rest of the chore has cost at most $\frac{n-i+2}{n+1}$ to the remaining $n-i+1$ agents. By the inductive hypothesis, it can be allocated in a connected manner to these agents so that each agent incurs cost at most $\frac{2}{n-i+2}\cdot\frac{n-i+2}{n+1}=\frac{2}{n+1}$.
    Hence we have a connected allocation with egalitarian cost at most $\frac{2}{n+1}$.
    
    \item Case 2: Every branch of $v$ along with the corresponding subtree satisfies Condition~1. Let $t\in\{1,2,\dots,k\}$ be the smallest number such that the first $t$ branches and their subtrees together, which we denote by $T$, do not satisfy Condition~1. 
    In particular, the first $t-1$ branches and their subtrees together, which we denote by $T_1$, satisfy Condition~1, and the $t$th branch and its subtree together, which we denote by $T_2$, also satisfy this condition. 
    
    We claim that $T$ does not satisfy Condition~2. Assume for contradiction that the opposite is true. Let $a_1\leq\dots\leq a_n$ be the costs of the agents for $T_1$ in increasing order, and $b_1\leq\dots\leq b_n$ be the corresponding costs for $T_2$. By Condition~1, we have $a_1\leq\frac{1}{n+1}$ and $a_i\leq\frac{i-1}{n+1}$ for $i\geq 2$; analogous upper bounds hold for the $b_i$'s. In order for a sum $a_i+b_j$ to be strictly greater than $\frac{r}{n+1}$ for some positive integer $r$, the upper bounds of $a_i$ and $b_j$ must add up to at least $\frac{r+1}{n+1}$. Hence, in order for Condition~2 to be satisfied, we must have
    $$2\left(\frac{1}{n+1}+\sum_{i=1}^{n-1}\frac{i}{n+1}\right)\geq \sum_{i=3}^{n+1}\frac{i}{n+1} + \frac{n+1}{n+1}.$$
    Multiplying both sides by $n+1$, this is equivalent to
    $$2+(n-1)n \geq \frac{(n+1)(n+2)}{2} -3 + (n+1),$$
    or $(n-1)(n-6)\geq 0$, which is false for $4\leq n\leq 5$. So $T$ does not satisfy Condition~2.
    
    Recall that $T$ does not satisfy Condition~1. Let $c_1\leq\dots\leq c_n$ be the costs of the agents for $T$ in increasing order, and let $i$ be the smallest index for which the inequality involving $c_i$ in Condition~1 fails. If $i\geq 2$, we may proceed as in Case~1 by allocating $T$ to the first $i-1$ agents and the rest of the chore to the remaining $n-i+1$ agents. So we may assume that $i=1$, i.e., $c_1\geq\frac{1}{n+1}$. Next, let $j$ be the smallest index for which the inequality involving $c_j$ in Condition~2 fails. Since $c_n\geq c_{n-1}$, we have $j< n$. This means that $c_r>\frac{r+1}{n+1}$ for $r=1,2,\dots,j-1$ and $c_j \leq \frac{j+1}{n+1}$. Hence, $T$ has cost at most $\frac{j+1}{n+1}$ to the first $j$ agents. By the inductive hypothesis, it can be allocated in a connected manner to these agents so that each agent incurs cost at most $\frac{2}{j+1}\cdot\frac{j+1}{n+1}=\frac{2}{n+1}$. If $j\geq 2$, then since $c_{j-1}>\frac{j}{n+1}$, the rest of the chore has cost at most $\frac{n-j+1}{n+1}$ to the remaining $n-j$ agents. If $j=1$, we know that $c_1\geq\frac{1}{n+1}$, and so the rest of the chore has cost at most $\frac{n}{n+1}=\frac{n-j+1}{n+1}$ to the remaining $n-1$ agents. In either case, by the inductive hypothesis, the rest of the chore can be allocated in a connected manner to these agents so that each agent incurs cost at most $\frac{2}{n-j+1}\cdot\frac{n-j+1}{n+1}=\frac{2}{n+1}$. Hence we again have a connected allocation with egalitarian cost at most $\frac{2}{n+1}$.
\end{itemize}
The two cases together complete the proof.
\end{proof}

We conjecture that the bound $\frac{2}{n+1}$ is tight for all $n$, and leave it as an intriguing open question.

\section{Conclusion and Future Work}

In this paper, we introduce and study a generalized version of the classical cake-cutting problem, where the cake can be represented by an arbitrary graph instead of an interval. 
We establish bounds on the utilities that can be guaranteed to the agents for various classes of graphs, both for cake cutting and chore division, and demonstrate in several cases that our guarantees are tight. 
We also show that better guarantees are possible if we allow more connected pieces per agent, and exhibit an algorithm that computes an approximately equitable allocation.

Our work opens up a number of new directions for future research. Besides proportionality and equitability, another prominent fairness notion is \emph{envy-freeness}, which stipulates that no agent prefers another agent's bundle to her own in the allocation. 
In the case of two agents, envy-freeness and proportionality are equivalent, and approximate proportionality bounds readily translate to corresponding approximate envy-freeness results.
However, this equivalence ceases to hold when there are more than two agents.
If the graph consists of a single edge, a connected envy-free allocation always exists for any number of agents \citep{Stromquist80}.
It would be interesting to see whether one can obtain (approximate) envy-freeness guarantees for different classes of graphs.

Like in the vast majority of the fair division literature, we assume in this paper that all parts of the resource either yield nonnegative utility to every agent (cake cutting) or nonpositive utility to every agent (chore division). 
Recently, Bogomolnaia et al.~\cite{BogomolnaiaMoSa17} and Segal-Halevi~\cite{Segalhalevi18} considered a generalization where an agent may have positive utility for some parts of the resource and negative utility for other parts, and different agents may have different evaluations. Aziz et al.~\cite{AzizCaIg19} showed the existence of a connected proportional allocation in this general setting when the resource is represented by an interval. Again, extending this result to more complex graphs is an appealing direction that we leave for future work.


\bibliographystyle{ACM-Reference-Format}
\bibliography{main}

\appendix

\section{Chore Division Protocol for Three Agents}
\label{app:three-agent-chore}

\begin{proposition}
\label{prop:three-agent-chore}
In chore division, for $n=3$ and any graph $G$, there exists a connected allocation with egalitarian cost at most $1/2$.
\end{proposition}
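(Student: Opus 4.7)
The plan is to extract a single connected piece using Lemma~\ref{lem:extract-piece} and then divide the complement with the two-agent chore result, Theorem~\ref{thm:two-agent-fixed-chore}. First I would apply Lemma~\ref{lem:extract-piece} with $H=G$ (so all three agents value $H$ at $1$) and $\alpha=1/4$, obtaining a partition of $G$ into two connected pieces $(P,\bar P)$ and a distinguished agent $i^*$ such that $f_{i^*}(P)\ge 1/4$ while each of the other two agents $j$ satisfies $f_j(P)\le 2\alpha=1/2$. Let $H_P:=\{j : f_j(P)\ge 1/4\}$; by construction $i^*\in H_P$, so $|H_P|\in\{1,2,3\}$, and I would case-split accordingly.

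In the case $|H_P|\ge 2$, I would assign $P$ to some agent $r\neq i^*$, chosen to lie outside $H_P$ when $|H_P|=2$ (and arbitrary otherwise). Then $f_r(P)\le 1/2$ by the lemma, and the two remaining agents both lie in $H_P$, so each of them values $\bar P$ at most $3/4$. Invoking Theorem~\ref{thm:two-agent-fixed-chore} on $\bar P$ with rescaled cost functions, one of these agents ends up with cost at most $\tfrac12 f(\bar P)\le \tfrac12$ and the other at most $\tfrac23 f(\bar P)\le \tfrac23\cdot\tfrac34=\tfrac12$.

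In the case $|H_P|=1$, the two non-$i^*$ agents $j,k$ both satisfy $f(P)<1/4$. I would give $P$ to $j$ (cost below $1/4$) and apply Theorem~\ref{thm:two-agent-fixed-chore} to $\bar P$ for the pair $(k,i^*)$, but crucially assigning $k$ to the ``first'' role (bound $\tfrac12 f(\bar P)$) and $i^*$ to the ``second'' role (bound $\tfrac23 f(\bar P)$). Since $f_{i^*}(\bar P)\le 3/4$, agent $i^*$'s cost is at most $\tfrac23\cdot\tfrac34=\tfrac12$, while agent $k$'s cost is at most $\tfrac12 f_k(\bar P)\le \tfrac12$. The only subtlety I foresee is that Theorem~\ref{thm:two-agent-fixed-chore} has to be applied to the subgraph $\bar P$ with cost functions rescaled so that each sums to $1$ on $\bar P$, and with the freedom to swap which of the two remaining agents plays which role; both points are routine. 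In all cases, every agent ends up with cost at most $1/2$.
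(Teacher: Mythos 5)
Your proof is correct, but it takes a different route from the paper's. The paper's argument is a two-line recursion using only Theorem~\ref{thm:two-agent-fixed-chore}: first split the entire chore between agents~1 and~2 (costs at most $1/2$ and $2/3$ respectively), then resplit agent~2's piece between agents~2 and~3, putting agent~3 in the ``first'' role (cost at most $\tfrac12\cdot 1=\tfrac12$) and agent~2 in the ``second'' role (cost at most $\tfrac23\cdot\tfrac23=\tfrac49$). No case analysis and no direct use of Lemma~\ref{lem:extract-piece} is needed. You instead extract a connected piece $P$ via Lemma~\ref{lem:extract-piece} with $\alpha=1/4$ and case-split on how many agents value $P$ at least $1/4$, assigning $P$ to a suitably chosen agent and splitting $\bar P$ with the two-agent theorem; I checked all three cases and the arithmetic works out (in each case the two agents sharing $\bar P$ either take the $\tfrac12 f(\bar P)\le\tfrac12$ role or have $f(\bar P)\le 3/4$ so that $\tfrac23\cdot\tfrac34=\tfrac12$). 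The minor points you flag---rescaling costs on the subpiece and choosing which agent plays which role---are indeed routine and are implicitly used by the paper as well. The paper's decomposition is cleaner because the piece handed to the third split is \emph{already} guaranteed to cost the resplitting agent at most $2/3$, which removes the need to track who values the extracted piece highly; your approach recovers the same bound but pays for it with the case analysis on $|H_P|$. Neither argument yields a strictly stronger guarantee, so the paper's version is preferable purely on grounds of economy.
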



\begin{proof}
Pick two arbitrary agents. By Theorem~\ref{thm:two-agent-fixed-chore}, there exists a connected allocation to the two agents such that the first agent incurs cost at most $1/2$ and the second agent incurs cost at most $2/3$. 
Fix the piece assigned to the first agent, and divide the piece assigned to the second agent further between the second and third agents. By Theorem~\ref{thm:two-agent-fixed-chore} again, there exists a connected allocation of the latter piece such that the third agent incurs cost at most $1/2$ and the second agent incurs cost at most $2/3\times 2/3 = 4/9 < 1/2$. 
Hence the egalitarian cost of the resulting allocation is at most $1/2$.
\end{proof}

\end{document}